\documentclass[11pt,reqno]{amsart}
\usepackage{amssymb}
\usepackage{amsmath}
\usepackage{amscd}
\textheight 23cm \textwidth 16cm \topmargin -0.8cm

\begin{document}
\setlength{\oddsidemargin}{0cm} \setlength{\evensidemargin}{0cm}

\theoremstyle{plain}
\newtheorem{theorem}{Theorem}[section]
\newtheorem{prop}[theorem]{Proposition}
\newtheorem{lemma}[theorem]{Lemma}
\newtheorem{coro}[theorem]{Corollary}
\newtheorem{conj}[theorem]{Conjecture}

\theoremstyle{definition}
\newtheorem{defi}[theorem]{Definition}
\newtheorem{exam}[theorem]{Example}
\newtheorem{remark}[theorem]{Remark}

\numberwithin{equation}{section}

\title{Classification of $(n+3)$-dimensional metric $n$-Lie algebras}

\author{Qiaozhi Geng}\address{School of Mathematical Sciences and LPMC, Nankai University,
Tianjin 300071, P.R. China}
\author{Mingming Ren}\address{School of Mathematical Sciences and LPMC, Nankai University,
Tianjin 300071, P.R. China}
\author{Zhiqi Chen$^*$}
\address{School of Mathematical Sciences and LPMC, Nankai University,
Tianjin 300071, P.R. China}\email{chenzhiqi@nankai.edu.cn}
\thanks{$^*$ Corresponding author}

\def\shorttitle{metric $n$-Lie algebras}

\subjclass[2000]{17B60, 17D25.}

\keywords{$n$-Lie algebra, metric $n$-Lie algebra}

\begin{abstract} In this paper, we focus on $(n+3)$-dimensional metric $n$-Lie algebras. To begin with,
we give some properties on $(n+3)$-dimensional $n$-Lie algebras.
Then based on the properties, we obtain the classification of
$(n+3)$-dimensional metric $n$-Lie algebras.
\end{abstract}

\maketitle

\baselineskip=20pt

\section{Introduction}
The theory of $n$-Lie algebras is strongly connected with many other
fields, such as dynamics, geometries and string theory. Takhtajan
\cite{Ta} developed the geometrical ideas of Nambu mechanics
\cite{Na} and introduced an analogue of the Jacobi identity, which
connects the generalized Nambu mechanics with the theory of $n$-Lie
algebras introduced by Filippov \cite{Fi}. It is given in \cite{HHM}
some new 3-Lie algebras and applications in membrane, including the
Basu-Harvey equation and the Bagger-Lambert model. More applications
can be found in \cite{BWXA,Ga,MVV,Nak,Po1,Po2,VV}.

A class of $n$-Lie algebras (finite dimensional, real, $n>2$) which
have appeared naturally in mathematical physics are those which
possess a non-degenerate inner product which is invariant under the
inner derivations, which are called metric $n$-Lie algebras. They
have arisen for the first time in the work of Figueroa-O'Farrill and
Papadopoulos \cite{FP} in the classification of maximally
supersymmetric type $IIB$ supergravity backgrounds \cite{FP1}, and
more recently, for the case of $n=3$, in the work of Bagger and
Lambert \cite{BL,BL1} and Gustavsson \cite{Gu} on a superconformal
field theory for multiple M2-branes. It is this latter work which
has revived the interest of part of the mathematical physics
community on metric $n$-Lie algebras. There are some progress on
metric $n$-Lie algebras, such as the classification for Euclidean
\cite{Nag} (see also \cite{Pa}) and lorentzian metric $n$-Lie
algebras \cite{Fo}, the classification of index-2 metric 3-Lie
algebras \cite{MFM} and a structure theorem for metric $n$-Lie
algebras \cite{J}.

The classification of $n$-Lie algebras is more difficult than that
of Lie algebras due to the $n$-ary multilinear operation. As we
know, the only attempt to the classification of $(n+2)$-dimensional
$n$-Lie algebras is due to 6-dimensional 4-Lie algebras \cite{BS}.
The main goal of this paper is to classify $(n+3)$-dimensional
metric $n$-Lie algebras. To begin with, we get some properties on
$(n+3)$-dimensional $n$-Lie algebras by the basic results about the
$n$-Lie modules of simple $n$-Lie algebras. Furthermore, we give the
classification of $(n+3)$-dimensional metric $n$-Lie algebras.

Throughout this paper, we assume that the algebras are
finite-dimensional. Obvious proofs are omitted.

\section{$n$-Lie algebra and Metric $n$-Lie algebra}

\subsection{$n$-Lie algebra}
An $n$-Lie algebra is a vector space $\mathfrak g$ over a field
$\mathbb F$ equipped with an $n$-multilinear operation [$x_1,
\cdots, x_n$] satisfying
\begin{equation}[x_1, \cdots, x_n]=\text{sgn}(\sigma)[x_{\sigma(1)}, \cdots, x_{\sigma(n)}], \end{equation}
\begin{equation}\label{def1}
[x_1, \cdots, x_{n-1}, [y_1, \cdots, y_n]]=\sum_{i=1}^n [y_1,
\cdots, [x_1,\cdots, x_{n-1}, y_i], \cdots, y_n]
\end{equation}
for all $x_1, \cdots, x_{n-1}, y_1, \cdots, y_n \in \mathfrak g$,
and $\sigma \in S_n$. Identity (\ref{def1}) is usually called the
generalized Jacobi identity, or simply the Jacobi identity.

A subspace $B$ of an $n$-Lie algebra ${\mathfrak g}$ is called a
subalgebra if $[x_1, \cdots, x_n]\in B$ for all $x_1, \cdots, x_n
\in B$; whereas an ideal $I$ is a subspace of ${\mathfrak g}$ if
$[x_1, x_2 \cdots, x_n]\in B$ for all $x_1\in I, x_2, \cdots, x_n
\in {\mathfrak g}$. Let $A_1, \cdots, A_n$ be subalgebras of an
$n$-Lie algebra ${\mathfrak g}$. Denote by $[A_1, A_2, \cdots, A_n]$
the subspace of ${\mathfrak g}$ generated by all vectors $[x_1, x_2,
\cdots, x_n]$, where $x_i \in A_i$ for $i=1, 2, \cdots, n.$ The
subalgebra ${\mathfrak g}^{1}=[{\mathfrak g}, {\mathfrak g}, \cdots,
{\mathfrak g}]$ is called the derived algebra of ${\mathfrak g}$. If
${\mathfrak g}^{1}=0$, then ${\mathfrak g}$ is called an abelian
$n$-Lie algebra. The subset $C({\mathfrak g})=\{x\in {\mathfrak
g}|[x, y_1, \cdots, y_{n-1}]=0, \forall y_1, \cdots, y_{n-1} \in
{\mathfrak g}\}$ is called the center of $\mathfrak g$. An ideal $I$
is said to be maximal if any other ideal $J$ containing $I$ is
either $\mathfrak g$ and $I$. An $n$-Lie algebra is said to be
simple if it has no proper ideals and dim$\mathfrak g^1>0$.

\begin{lemma}If $I$ is a maximal ideal, then ${\mathfrak g}/I$ is simple or
one-dimensional.
\end{lemma}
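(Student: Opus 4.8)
The plan is to pass to the quotient $\mathfrak g/I$ and exploit the correspondence between its ideals and the ideals of $\mathfrak g$ containing $I$. First I would record that $\mathfrak g/I$ is again an $n$-Lie algebra under the induced bracket $[\bar x_1,\dots,\bar x_n]=\overline{[x_1,\dots,x_n]}$, which is well defined precisely because $I$ is an ideal; this amounts to the routine verification that antisymmetry and the Jacobi identity (\ref{def1}) descend to the quotient. Next, writing $\pi\colon\mathfrak g\to\mathfrak g/I$ for the projection, I would note that $\bar J\mapsto\pi^{-1}(\bar J)$ is an inclusion-preserving bijection from the ideals of $\mathfrak g/I$ onto the ideals of $\mathfrak g$ containing $I$. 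Since $I$ is maximal (and, as usual, proper, so that $\mathfrak g/I\neq 0$), any ideal $\bar J$ of $\mathfrak g/I$ satisfies $\pi^{-1}(\bar J)=I$ or $\pi^{-1}(\bar J)=\mathfrak g$, i.e. $\bar J=0$ or $\bar J=\mathfrak g/I$. Hence $\mathfrak g/I$ has no proper ideals.

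Now I would split into two cases according to the derived algebra $(\mathfrak g/I)^1$. If $(\mathfrak g/I)^1\neq 0$, then $\dim(\mathfrak g/I)^1>0$, and together with the absence of proper ideals this is exactly the definition of simplicity, so $\mathfrak g/I$ is simple. If instead $(\mathfrak g/I)^1=0$, then $\mathfrak g/I$ is abelian, and in an abelian $n$-Lie algebra every subspace is an ideal because all brackets vanish. Since $\mathfrak g/I$ has no proper ideals, it then has no proper subspaces, and being nonzero it must be one-dimensional. This exhausts the cases and proves the lemma.

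As for difficulty, there is essentially no obstacle here: the only steps needing any care are checking that the bracket descends to $\mathfrak g/I$ (so that the statement even makes sense) and remembering that "simple" carries the extra requirement $\dim\mathfrak g^1>0$, which is precisely why the one-dimensional abelian quotient has to be listed as a separate possibility rather than being absorbed into the simple case. A minor convention point worth flagging in the write-up is that "maximal ideal" is understood to mean \emph{proper} maximal ideal; otherwise $I=\mathfrak g$ would be permitted and $\mathfrak g/I=0$ is neither simple nor one-dimensional.
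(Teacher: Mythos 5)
Your proof is correct; the paper states this lemma without proof (falling under its blanket remark that obvious proofs are omitted), and your argument --- the ideal correspondence between $\mathfrak g/I$ and ideals of $\mathfrak g$ containing $I$, followed by the case split on whether $(\mathfrak g/I)^1$ vanishes, with the abelian case forcing dimension one since every subspace of an abelian $n$-Lie algebra is an ideal --- is exactly the standard argument the authors intend. Your flag that ``maximal'' must be read as ``proper maximal'' (so that $\mathfrak g/I\neq 0$) is a fair point of care given the paper's slightly garbled definition of maximality.
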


The classification of simple real $n$-Lie algebras is given as
follows.

\begin{theorem}[\cite{Li}]\label{T1}
A simple real $n$-Lie algebra is isomorphic to one of the $(n+1)$-
dimensional $n$-Lie algebras defined, relative to a basis $e_i$, by
\begin{equation}\label{6}
[e_1,\cdots,\hat{e_i},\cdots,e_{n+1}]=(-1)^{i}\varepsilon_ie_i,
1\leqslant i \leqslant n+1
\end{equation}
where symbol $\hat{e_i}$ means that $e_i$ is omitted in the bracket
the $\varepsilon_i$ are signs.
\end{theorem}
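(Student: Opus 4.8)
The plan is to study $\mathfrak g$ through its Lie algebra of inner derivations $\mathfrak L:=\mathrm{ad}(\mathfrak g)$, the span of the operators $\mathrm{ad}(x_1,\dots,x_{n-1})\colon y\mapsto[x_1,\dots,x_{n-1},y]$. Identity (\ref{def1}) says precisely that each such operator is a derivation of the bracket, so $\mathfrak L$ is a Lie subalgebra of $\mathfrak{gl}(\mathfrak g)$. First I would collect the cheap consequences of simplicity: the centre $C(\mathfrak g)$ and the derived algebra $\mathfrak g^1$ are ideals, so $C(\mathfrak g)=0$ and $\mathfrak g^1=\mathfrak g$. Vanishing of the centre makes $\mathfrak g$ a faithful $\mathfrak L$-module, and because an $\mathfrak L$-submodule of $\mathfrak g$ is exactly an ideal, $\mathfrak g$ is irreducible over $\mathfrak L$. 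Identity (\ref{def1}) also shows that the bracket is a nonzero $\mathfrak L$-equivariant alternating map $\mu\colon\Lambda^n\mathfrak g\to\mathfrak g$, while $x_1\wedge\cdots\wedge x_{n-1}\mapsto\mathrm{ad}(x_1,\dots,x_{n-1})$ is a surjective $\mathfrak L$-module map $\Lambda^{n-1}\mathfrak g\twoheadrightarrow\mathfrak L$.

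Next I would prove that $\mathfrak L$ is semisimple. Complexify and let $\mathfrak r$ be the solvable radical of $\mathfrak L_{\mathbb C}$, acting on the irreducible faithful module $V:=\mathfrak g_{\mathbb C}$. By the standard corollary of Lie's theorem, a solvable ideal acts on an irreducible module by scalars through a single functional $\lambda\in\mathfrak r^*$; then $[\mathfrak L_{\mathbb C},\mathfrak r]$ acts by $0$, so faithfulness forces $\mathfrak r$ to be central in $\mathfrak L_{\mathbb C}$ and to act by the scalar $\lambda$. If $r\in\mathfrak r$ has $\lambda(r)=c$, then the inner derivation $r$ equals $c\cdot\mathrm{id}$, and applying it to an arbitrary bracket yields $c[x_1,\dots,x_n]=nc[x_1,\dots,x_n]$; since $\mathfrak g^1_{\mathbb C}=\mathfrak g_{\mathbb C}\neq0$ this gives $(n-1)c=0$, hence $c=0$ and $\mathfrak r=0$. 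So $\mathfrak L$ is semisimple. I expect this step together with the next to carry the weight of the argument.

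The heart of the proof is then a rigidity statement in representation theory: a semisimple Lie algebra $\mathfrak L_{\mathbb C}=\mathfrak s_1\oplus\cdots\oplus\mathfrak s_k$ and a faithful irreducible module $V=V_1\boxtimes\cdots\boxtimes V_k$ that simultaneously admit a surjection of $\mathfrak L_{\mathbb C}$-modules $\Lambda^{n-1}V\twoheadrightarrow\mathfrak L_{\mathbb C}$ and a nonzero $\mathfrak L_{\mathbb C}$-equivariant alternating map $\Lambda^nV\to V$ must have $k=1$, with $\mathfrak L_{\mathbb C}\cong\mathfrak{so}(n+1,\mathbb C)$ acting on its vector representation $V$. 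The surjection onto the adjoint module bounds $\dim\mathfrak L_{\mathbb C}$ in terms of $\dim V$ and, analysed on highest weights, is what eliminates all the other simple Lie algebras and their small modules and forces $k=1$; the alternating map into $V$ then pins $V$ down to the vector representation and gives $\dim\mathfrak g=n+1$. This case analysis against the classification of simple Lie algebras is the genuinely technical core, and is the step I would expect to be the main obstacle. Once it is in place, Schur's lemma and Hodge duality give $\mathrm{Hom}_{\mathfrak{so}(V)}(\Lambda^nV,V)\cong\mathrm{Hom}_{\mathfrak{so}(V)}(V,V)\cong\mathbb C$, so up to a scalar $\mu$ is the generalized cross product determined by the invariant form $B$ and a volume form $\omega$ via $B(\mu(x_1,\dots,x_n),y)=\omega(x_1,\dots,x_n,y)$.

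Finally I would descend to $\mathbb R$. The form $B$ is real up to a positive scalar and has some signature $(p,q)$ with $p+q=n+1$; choosing a $B$-orthonormal basis $e_1,\dots,e_{n+1}$ with $B(e_i,e_i)=\varepsilon_i\in\{\pm1\}$ normalizes the cross product to the brackets (\ref{6}). Conversely every sign pattern defines a simple real $n$-Lie algebra, and since rescaling the $e_i$ only multiplies all the $\varepsilon_i$ by a single common sign, the algebra depends up to isomorphism only on the unordered pair $\{p,q\}$. This produces exactly the family in the statement.
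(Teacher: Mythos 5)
The paper offers no proof of this theorem at all: it is quoted verbatim from Ling's dissertation \cite{Li}, so there is no internal argument to compare yours against. Judged on its own terms, your reductions are sound: simplicity gives $C(\mathfrak g)=0$ and $\mathfrak g^1=\mathfrak g$, ideals are exactly $\mathfrak L$-submodules so $\mathfrak g$ is irreducible over the inner derivation algebra $\mathfrak L$, and the Lie's-theorem-plus-scalar-derivation trick ($c=nc$ on a nonzero bracket) correctly kills the radical of $\mathfrak L_{\mathbb C}$ (modulo the small unaddressed point that $\mathfrak g_{\mathbb C}$ need not remain irreducible after complexification, which the Lie's theorem step tacitly assumes). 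The genuine gap is exactly where you predict the main obstacle: the ``rigidity statement'' that a semisimple $\mathfrak L_{\mathbb C}$ with a faithful irreducible module $V$ admitting both $\Lambda^{n-1}V\twoheadrightarrow\mathfrak L_{\mathbb C}$ and a nonzero equivariant alternating map $\Lambda^nV\to V$ must be $\mathfrak{so}(n+1,\mathbb C)$ on its vector representation. That statement \emph{is} the theorem; you give no argument for it --- no dimension estimates from the Weyl formula, no analysis of which highest weights allow the adjoint module inside $\Lambda^{n-1}V$, no treatment of the case of a product $\mathfrak s_1\oplus\cdots\oplus\mathfrak s_k$ with $k>1$.

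The step is also more delicate than the outline suggests. Nonzero equivariant alternating maps $\Lambda^nV\to V$ do exist away from the orthogonal vector representation: for $n=3$, the Cayley triple product on the $8$-dimensional spin representation of $\mathfrak{so}(7)$, obtained by contracting the invariant $4$-form, satisfies both of your representation-theoretic conditions in spirit and is ruled out only because the generalized Jacobi identity forces $\mu$ to be equivariant under the \emph{full} image of $\Lambda^{n-1}V$ in $\mathfrak{gl}(V)$ (which in that example is $\mathfrak{so}(8)$, strictly larger than the stabilizer of the $4$-form). So the two conditions must be played against each other with some care, and ``analysed on highest weights\dots is what eliminates all the other simple Lie algebras'' is a promissory note rather than a proof. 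Until that case analysis is carried out, your argument establishes the reductions but not the classification.
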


It is plain to see that simple real $n$-Lie algebras admit invariant
inner products \cite{J}. Indeed, the $n$-Lie algebra in Theorem
\ref{T1} leaves invariant the diagonal inner product with entries
$(\varepsilon_1,\cdots,\varepsilon_{n+1})$.

An ideal $I$ of an $n$-Lie algebra $\mathfrak g$ is called an
$n$-solvable ideal if $I^{(r,n)}=0$ for some $r\geqslant 0$, where
$I^{(0,n)}=I$ and $I^{(s+1,n)}=[I^{(s,n)},\cdots, I^{(s,n)}]$ for
$s\geqslant 0$ by induction. The maximal $n$-solvable ideal of
$\mathfrak g$ is called the $n$-radical of $\mathfrak g$. An $n$-Lie
algebra $\mathfrak g$ is said to be strong semisimple if the
$n$-radical of $\mathfrak g$ is zero.

\begin{lemma}[\cite{Li}]\label{l1}
Let $\mathfrak g$ be a finite-dimensional $n$-Lie algebra over an
algebraically closed field of characteristic zero. Then $\mathfrak
g$ has a decomposition $\mathfrak g = \mathfrak s \oplus \mathfrak
r$, where $\mathfrak s$ is a strong semisimple subalgebra and
$\mathfrak r$ is the $n$-radical of $\mathfrak g$.
\end{lemma}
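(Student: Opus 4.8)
The plan is to carry out the $n$-ary analogue of the Levi--Malcev argument, by induction on $\dim\mathfrak r$ where $\mathfrak r$ denotes the $n$-radical of $\mathfrak g$ (it exists because the sum of two $n$-solvable ideals is $n$-solvable, which follows from the fact that an extension of an $n$-solvable ideal by an $n$-solvable quotient is $n$-solvable). Two preliminary observations are needed. First, if $I$ is any ideal of $\mathfrak g$, then every term $I^{(s,n)}$ of its $n$-derived series is again an ideal of $\mathfrak g$: for $y_1,\dots,y_n\in I^{(s,n)}$ and $x_1,\dots,x_{n-1}\in\mathfrak g$, each summand on the right-hand side of the Jacobi identity~(\ref{def1}) lies in $I^{(s,n)}$; consequently, if $\mathfrak r\neq 0$ and $\mathfrak r^{(t,n)}$ is its last nonzero $n$-derived term, then $\mathfrak a:=\mathfrak r^{(t,n)}$ is a nonzero \emph{abelian} ideal of $\mathfrak g$. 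Second, the preimage in $\mathfrak g$ of the $n$-radical of $\mathfrak g/\mathfrak r$ is an $n$-solvable ideal containing $\mathfrak r$, hence equals $\mathfrak r$ by maximality, so $\mathfrak g/\mathfrak r$ is strong semisimple; therefore any subalgebra $\mathfrak s$ with $\mathfrak g=\mathfrak s\oplus\mathfrak r$ is automatically strong semisimple, being isomorphic to $\mathfrak g/\mathfrak r$, and it only remains to produce such a complement.

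The inductive step consists of two successive reductions. If $\mathfrak a\subsetneq\mathfrak r$, pass to $\mathfrak g/\mathfrak a$, whose $n$-radical is $\mathfrak r/\mathfrak a$ of strictly smaller dimension; by induction $\mathfrak g/\mathfrak a=\bar{\mathfrak s}\oplus(\mathfrak r/\mathfrak a)$, and the full preimage $\mathfrak h$ of $\bar{\mathfrak s}$ in $\mathfrak g$ is a subalgebra whose $n$-radical is $\mathfrak a$, so a second application of the induction hypothesis (legitimate since $\dim\mathfrak a<\dim\mathfrak r$) gives $\mathfrak h=\mathfrak s\oplus\mathfrak a$, whence $\mathfrak g=\mathfrak s\oplus\mathfrak r$ by a dimension count. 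This reduces us to the case $\mathfrak a=\mathfrak r$, i.e.\ $\mathfrak r$ abelian; an identical reduction, this time quotienting by a proper nonzero $\mathfrak g$-ideal contained in $\mathfrak r$ whenever one exists, reduces us further to the case where $\mathfrak r$ is a \emph{minimal} ideal of $\mathfrak g$ and abelian. In that situation $\mathfrak r$ is an irreducible module over the strong semisimple $n$-Lie algebra $\mathfrak s_0:=\mathfrak g/\mathfrak r$ -- possibly the trivial one-dimensional module, which is exactly the case $\mathfrak r\subseteq C(\mathfrak g)$ -- and splitting the abelian extension $0\to\mathfrak r\to\mathfrak g\to\mathfrak s_0\to 0$ of $n$-Lie algebras amounts to showing that the $2$-cocycle measuring the failure of a fixed linear section $\mathfrak s_0\to\mathfrak g$ to be a homomorphism is a coboundary, i.e.\ to the vanishing of a suitable second cohomology group $H^2(\mathfrak s_0,\mathfrak r)$.

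The main obstacle is this vanishing -- the second Whitehead lemma for $n$-Lie algebras -- and it is here that the hypotheses (algebraically closed, characteristic zero) are genuinely used. My plan is to reduce it to classical Lie theory: by the structure theory of \cite{Li}, over an algebraically closed field of characteristic zero a strong semisimple $n$-Lie algebra is a direct sum of simple ideals, so one may assume $\mathfrak s_0$ simple, i.e.\ the $(n+1)$-dimensional algebra of Theorem~\ref{T1}. For it, the Lie algebra $\mathcal L$ spanned by the inner derivations $\operatorname{ad}(x_1,\dots,x_{n-1})\colon y\mapsto[x_1,\dots,x_{n-1},y]$ -- which close under commutator by virtue of~(\ref{def1}) -- is semisimple (of orthogonal type), every finite-dimensional $\mathfrak s_0$-module is completely reducible as an $\mathcal L$-module by Weyl's theorem, and the standard Casimir-operator averaging then exhibits the obstructing cocycle as a coboundary. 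Granting $H^2(\mathfrak s_0,\mathfrak r)=0$, the extension splits, the complement $\mathfrak s$ is the desired strong semisimple subalgebra, and the induction closes. (The companion vanishing $H^1(\mathfrak s_0,\mathfrak r)=0$ is what underlies the complete-reducibility facts invoked in the intermediate reductions.)
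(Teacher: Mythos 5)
The paper offers no proof of this lemma: it is quoted verbatim from Ling's dissertation \cite{Li}, so there is nothing in the text to compare your argument against. Your outline is the natural $n$-ary transcription of the Levi--Malcev argument and is almost certainly the strategy of the cited source; the preliminary reductions (existence of the $n$-radical, the terms $I^{(s,n)}$ being ideals of $\mathfrak g$, strong semisimplicity of $\mathfrak g/\mathfrak r$, and the two-step induction down to a minimal abelian radical) are all correct as stated.

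The gaps are concentrated in the last step, and they are genuine. First, for $n\geqslant 3$ the statement that a minimal abelian $\mathfrak r$ is a module over $\mathfrak s_0=\mathfrak g/\mathfrak r$ in the sense defined in this paper requires $[\mathfrak r,\mathfrak r,\mathfrak g,\cdots,\mathfrak g]=0$, which is strictly stronger than the condition $[\mathfrak r,\cdots,\mathfrak r]=0$ that your derived-series reduction actually delivers. Minimality only tells you that the ideal $[\mathfrak r,\mathfrak r,\mathfrak g,\cdots,\mathfrak g]$ is $0$ or all of $\mathfrak r$, and you give no argument excluding the second alternative; without it the extension is not an abelian extension and no $H^2$ governs its splitting. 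Second, the vanishing $H^2(\mathfrak s_0,\mathfrak r)=0$ --- which you correctly identify as the crux --- is asserted rather than proved: you do not specify the cochain complex for $n$-Lie algebras whose second cohomology classifies these extensions (the relevant cocycle is an alternating map $\Lambda^n\mathfrak s_0\to\mathfrak r$, not a Lie-algebra $2$-cochain, and the Casimir homotopy formula must be re-derived in that complex); the reduction from strong semisimple to simple $\mathfrak s_0$ is nontrivial because modules over a direct sum of simple $n$-Lie ideals do not decompose as naively as in the Lie case; and the trivial-module case, where the Casimir acts by zero and the averaging argument breaks down, is precisely the central-extension case $\mathfrak r\subseteq C(\mathfrak g)$ that you flag but do not handle. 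As it stands the proposal is a correct roadmap with the main technical content --- the $n$-Lie analogue of Whitehead's second lemma --- still missing.
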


\begin{lemma}[\cite{BM}]\label{l2}
Let $\mathfrak g$ be a finite-dimensional $n$-Lie algebra over an
algebraically closed field of characteristic zero. Then $\mathfrak
g$ is strong semisimple if and only if $\mathfrak g$ can be
decomposed into the direct sum of its simple ideals.
\end{lemma}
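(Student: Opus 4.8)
The plan is to prove the two implications separately. For the ``if'' direction, suppose $\mathfrak g=I_1\oplus\cdots\oplus I_k$ with each $I_j$ a simple ideal, and let $\pi_j\colon\mathfrak g\to I_j$ be the projections. Since the $I_j$ are ideals, a bracket $[x_1,\cdots,x_n]$ with two arguments in different summands lies in $I_a\cap I_b=0$, so $[x_1,\cdots,x_n]=\sum_j[\pi_jx_1,\cdots,\pi_jx_n]$ and each $\pi_j$ is a homomorphism of $n$-Lie algebras. If $\mathfrak a$ were a nonzero $n$-solvable ideal of $\mathfrak g$ then $\pi_j(\mathfrak a)\neq0$ for some $j$; it is an ideal of the simple algebra $I_j$, hence equals $I_j$, and it is $n$-solvable as a homomorphic image of $\mathfrak a$, contradicting $I_j^{(1,n)}=I_j\neq0$. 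Hence the $n$-radical of $\mathfrak g$ is zero.

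For the converse I would induct on $\dim\mathfrak g$, the case $\mathfrak g=0$ being trivial, and pick a minimal nonzero ideal $I$ of $\mathfrak g$. If $I=\mathfrak g$, then $\mathfrak g$ has no proper nonzero ideal and cannot be abelian (an abelian algebra is $n$-solvable, hence zero by strong semisimplicity, whereas $\mathfrak g\neq0$), so $\mathfrak g^{(1,n)}\neq0$ and $\mathfrak g$ is simple. Assume now $\dim I<\dim\mathfrak g$; the key point is that $I$ is simple. The genuine difficulty is $n$-ary in nature: an ideal of the $n$-Lie algebra $I$ need not be an ideal of $\mathfrak g$. To bypass it, observe that each inner derivation $\operatorname{ad}(g_1,\cdots,g_{n-1})$ of $\mathfrak g$ maps the ideal $I$ into itself and restricts there to a derivation of $I$; since in characteristic zero a derivation preserves the $n$-radical, the $n$-radical of $I$ is stable under $\operatorname{ad}\mathfrak g$, hence an ideal of $\mathfrak g$ contained in $I$, hence an $n$-solvable ideal of $\mathfrak g$, hence zero. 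Thus $I$ is strong semisimple, so the inductive hypothesis gives $I=\bigoplus_\mu S_\mu$ with the $S_\mu$ simple ideals of $I$. A short computation from the definition of a derivation --- using that the mixed brackets $[S_\mu,S_\nu,\cdots]$ vanish for $\mu\neq\nu$ and that $[S_\mu,\cdots,S_\mu]=S_\mu$ --- shows that every derivation of $I$ maps each $S_\mu$ into itself; in particular each $S_\mu$ is $\operatorname{ad}\mathfrak g$-stable, hence an ideal of $\mathfrak g$, and minimality of $I$ forces $S_\mu=I$. Therefore $I$ is simple.

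It remains to split off the simple ideal $S:=I$ and conclude by induction. Put $\mathfrak g'=\{x\in\mathfrak g:[x,S,\mathfrak g,\cdots,\mathfrak g]=0\}$. Expanding $[\,[x,g_1,\cdots,g_{n-1}],s,h_2,\cdots,h_{n-1}]$ by the Jacobi identity (\ref{def1}) for $s\in S$, and using $[s,\mathfrak g,\cdots,\mathfrak g]\subseteq S$, one checks that every term vanishes when $x\in\mathfrak g'$, so $\mathfrak g'$ is an ideal of $\mathfrak g$. Moreover $S\cap\mathfrak g'$ is an ideal of $\mathfrak g$ inside the minimal ideal $S$; it is proper, since if it equalled $S$ then $[S,S,\mathfrak g,\cdots,\mathfrak g]$ would vanish, contradicting $[S,S,\mathfrak g,\cdots,\mathfrak g]\supseteq[S,\cdots,S]=S\neq0$, so $S\cap\mathfrak g'=0$. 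The main obstacle is the complementary inclusion $\mathfrak g\subseteq S+\mathfrak g'$: I would deduce it by using that every derivation of the simple algebra $S$ is inner, in order to find for each $x\in\mathfrak g$ an element $s_0\in S$ whose action on $S$ matches that of $x$, so that $x-s_0\in\mathfrak g'$. Granting this, $\mathfrak g=S\oplus\mathfrak g'$ is a direct sum of ideals; $\mathfrak g'$ is strong semisimple (again its $n$-radical is an $n$-solvable ideal of $\mathfrak g$, by the derivation argument) and of strictly smaller dimension, so by the inductive hypothesis it is a direct sum of simple ideals, which are also ideals of $\mathfrak g$ because $\mathfrak g=S\oplus\mathfrak g'$. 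Hence $\mathfrak g$ is the direct sum of $S$ with these, completing the induction. The two hard points are exactly the two places where an ideal of an ideal must be promoted to an ideal of $\mathfrak g$ --- the simplicity of the minimal ideal $I$ and the splitting $\mathfrak g=S\oplus\mathfrak g'$ --- and both rest on the interplay between inner derivations, the $n$-radical, and the innerness of derivations of simple $n$-Lie algebras.
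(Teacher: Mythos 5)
The paper offers no proof of this lemma: it is quoted from Bai--Meng \cite{BM}, so there is nothing internal to compare against and your proposal must stand on its own. Much of it does. The ``if'' direction is correct: mixed brackets between distinct simple summands vanish because each summand is an ideal and pairwise intersections are zero, the projections are then $n$-Lie homomorphisms, and a nonzero $n$-solvable ideal would have as image a nonzero $n$-solvable ideal of some simple summand $I_j$, impossible since $I_j^{(1,n)}=I_j\neq 0$. In the converse, your argument that a minimal ideal $I$ is simple is also essentially sound: the observation that a derivation of $I=\bigoplus_\mu S_\mu$ preserves each $S_\mu$ (because $S_\mu=[S_\mu,\cdots,S_\mu]$ and any bracket with two arguments in $S_\mu$ and one in $S_\nu$ lands in $S_\mu\cap S_\nu=0$) is right, though you do lean on the unproved (true, but nontrivial in the $n$-ary setting) fact that derivations preserve the $n$-radical in characteristic zero. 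Likewise, your Jacobi-identity check that $\mathfrak g'=\{x:[x,S,\mathfrak g,\cdots,\mathfrak g]=0\}$ is an ideal, and that $S\cap\mathfrak g'=0$ by minimality, are correct.

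The genuine gap is the inclusion $\mathfrak g\subseteq S+\mathfrak g'$, which you flag as ``the main obstacle'' and then dismiss in one clause. The reduction you propose --- every derivation of $S$ is inner, so find $s_0\in S$ whose action on $S$ matches that of $x$ --- is the $n=2$ argument and does not transfer to $n\geqslant 3$. A single element $x$ does not induce a single derivation of $S$: it induces the whole family $\operatorname{ad}(x,g_2,\cdots,g_{n-1})|_S$ indexed by $(g_2,\cdots,g_{n-1})\in\mathfrak g^{n-2}$. Even for one fixed choice of the $g_i$, innerness of that one derivation only yields a sum $\sum_k\operatorname{ad}(t_1^k,\cdots,t_{n-1}^k)$ with all $t_i^k\in S$, from which there is no evident way to extract a single $s_0$ with $\operatorname{ad}(s_0,g_2,\cdots,g_{n-1})=\operatorname{ad}(x,g_2,\cdots,g_{n-1})$ on $S$; and membership of $x-s_0$ in $\mathfrak g'$ requires one $s_0$ working simultaneously for all $(g_2,\cdots,g_{n-1})$ ranging over $\mathfrak g^{n-2}$, not merely over $S^{n-2}$. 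This is exactly where the $n$-ary theory diverges from the Lie-algebra case and where the actual proofs (in Ling's thesis and in \cite{BM}) do real work, e.g.\ by exploiting the invariant form or the explicit $(n+1)$-dimensional structure of the simple algebras from Theorem \ref{T1}. As written, this step restates what must be proved rather than proving it, so the ``only if'' direction is incomplete.
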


An $n$-Lie module over an $n$-Lie algebra $\mathfrak g$ is defined
as a vector space $M$ such that the semi-direct sum $\mathfrak g +
M$ satisfies the generalized Jacobi identity. This means that the
brackets $[, \cdots , ]$ on $\mathfrak g$ is continued to $\mathfrak
g+ M$ such that $[x_1,\cdots, x_n]=0$, if at least two arguments
among $x_1, \cdots, x_n$ belong to $M$ and the generalized Jacobi
identity is true for all $x_1,\cdots, x_n \in \mathfrak g+M$. In
other words,
\begin{align*}
& [x_1,\cdots, x_{i-1},m,x_{i+1},\cdots, x_n]=-[x_1, \cdots,
x_{i-1},x_{i+1},
m, \cdots, x_n],\quad 1\leqslant i< n, \\
& [x_1, \cdots, x_{n-1}, [x_n, \cdots, x_{2n-2}, m]]-[x_n, \cdots,
x_{2n-2}, [x_1, \cdots, x_{n-1}, m]]  \\
& =\sum_{i=n}^{2n-2} [x_n, \cdots, x_{i-1}, [x_1, \cdots, x_{n-1},
x_i], x_{i+1}, \cdots, x_{2n-2}, m], \\
& [x_1, \cdots, x_{n-2}, [x_n, \cdots, x_{2n-1}],m] \\
& =\sum_{i=n}^{2n-1} [x_n, \cdots, x_{i-1}, [x_1, \cdots, x_{n-2},
x_i, m], x_{i+1}, \cdots, x_{2n-1}],
\end{align*}
for all $x_1, \cdots, x_{n-2}, x_n, \cdots, x_{2n-1} \in \mathfrak
g$ and $m \in M$.

A subspace $N$ of $M$ is called an $n$-Lie submodule of $M$ if
$[x_1, \cdots, x_{i-1}, m, x_{i+1}, \cdots, x_n] \in N$, for all
$m\in N$ and $x_1, \cdots, \hat{x_i}, \cdots, x_n \in \mathfrak g$.
$M$ is said to be irreducible if any submodule is trivial and
completely reducible if it can be decomposed to a direct sum of
irreducible submodules.

\subsection{Metric $n$-Lie algebra}
A metric $n$-Lie algebra $\mathfrak g$ is an $n$-Lie algebra over
$\mathbb{R}$ with a symmetric non-degenerate bilinear form $\langle
, \rangle $ satisfying
\begin{equation}\label{def2}\langle [x_1, \cdots, x_n], y \rangle + \langle x_n, [x_1, \cdots, x_{n-1}, y] \rangle = 0 \end{equation}
for all $x_1, \cdots, x_n , y \in \mathfrak g$.

Given two metric $n$-Lie algebras $\mathfrak g_1$ and $\mathfrak
g_2$, we may form their orthogonal direct sum $\mathfrak g_1 \oplus
\mathfrak g_2$, by declaring that
\begin{equation*}
 [x_1, x_2, y_1,\cdots,y_{n-2}]=0 \text{ and } \langle x_1, x_2\rangle=0,
\end{equation*}
for all $x_i\in {\mathfrak g_1}$ and $y_i\in \mathfrak g_1 \oplus
\mathfrak g_2$. The resulting object is again a metric $n$-Lie
algebra. A metric $n$-Lie algebra is said to be indecomposable if it
is not isomorphic to an orthogonal direct sum of metric $n$-Lie
algebras $\mathfrak g_1 \oplus \mathfrak g_2$ with dim$\mathfrak
g_i>0$.

If $W \subseteq {\mathfrak g}$ is any subspace, we define
\begin{equation*}
W^\bot=\{v\in {\mathfrak g} |\langle v,w\rangle=0, \forall w \in
W\}.
\end{equation*}
Notice that $(W^\bot)^\bot=W$. We say that $W$ is non-degenerate, if
$W \cap W^\bot=0$, whence $V=W\oplus W^\bot$; isotropic, if
$W\subseteq W^\bot$; Of course, in positive definite signature, all
subspaces are non-degenerate.

An equivalent criterion for decomposability is the existence of a
proper non-degenerate ideal: for if $I$ is a non-degenerate ideal,
${\mathfrak g}=I\oplus I^\bot$ is an orthogonal direct sum of
ideals.

\begin{lemma}[\cite{J}]\label{2}
If $\mathfrak g$ is a metric n-Lie algebra, then $C(\mathfrak
g)=(\mathfrak g^{1})^{\bot}$.
\end{lemma}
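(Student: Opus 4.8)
The plan is to establish the two inclusions $C(\mathfrak g)\subseteq(\mathfrak g^{1})^{\bot}$ and $(\mathfrak g^{1})^{\bot}\subseteq C(\mathfrak g)$ separately, in each case using only the defining invariance identity (\ref{def2}) together with the skew-symmetry of the $n$-ary bracket and the non-degeneracy of $\langle\,,\,\rangle$. No structural input—such as the decomposition results recalled above—should be needed; this is a direct consequence of the compatibility between the form and the bracket.

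For the inclusion $C(\mathfrak g)\subseteq(\mathfrak g^{1})^{\bot}$, I would take $x\in C(\mathfrak g)$ and an arbitrary generator $[x_1,\cdots,x_n]$ of $\mathfrak g^{1}$. Applying (\ref{def2}) gives $\langle[x_1,\cdots,x_n],x\rangle=-\langle x_n,[x_1,\cdots,x_{n-1},x]\rangle$, and the bracket on the right vanishes because $[x_1,\cdots,x_{n-1},x]=\pm[x,x_1,\cdots,x_{n-1}]=0$ by definition of the center. Hence $x$ is orthogonal to every generator of $\mathfrak g^{1}$, so $x\in(\mathfrak g^{1})^{\bot}$.

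For the reverse inclusion, take $x\in(\mathfrak g^{1})^{\bot}$ and arbitrary $y_1,\cdots,y_{n-1},z\in\mathfrak g$. Moving $x$ into the last slot by skew-symmetry and then invoking (\ref{def2}) yields $\langle[x,y_1,\cdots,y_{n-1}],z\rangle=\pm\langle[y_1,\cdots,y_{n-1},x],z\rangle=\mp\langle x,[y_1,\cdots,y_{n-1},z]\rangle$, which is $0$ since $[y_1,\cdots,y_{n-1},z]\in\mathfrak g^{1}$. As this holds for every $z$ and $\langle\,,\,\rangle$ is non-degenerate, $[x,y_1,\cdots,y_{n-1}]=0$; since $y_1,\cdots,y_{n-1}$ were arbitrary, $x\in C(\mathfrak g)$.

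The only point requiring attention is the bookkeeping of the signs produced when permuting arguments into the positions demanded by the skew-symmetry and by (\ref{def2}); these signs are immaterial, as they never change whether an expression vanishes. So I do not expect any genuine obstacle here—the statement follows once the two inclusions above are spelled out.
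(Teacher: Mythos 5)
Your proof is correct. The paper itself gives no proof of this lemma (it is quoted from \cite{J}), but your two-inclusion argument using the invariance identity (\ref{def2}), skew-symmetry, and non-degeneracy is exactly the standard proof of this fact, and every step checks out.
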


\section{Some properties on $(n+3)$-dimensional $n$-Lie algebras}
In this section, we firstly get some properties on
$(n+3)$-dimensional $n$-Lie algebras, which play a crucial role in
the classification of $(n+3)$-dimensional metric $n$-Lie algebras.
Let $V_n$ be the unique simple $n$-Lie algebra over the complex
numbers field, defined by
$$[e_1,\cdots,\hat{e_i},\cdots,e_{n+1}]=(-1)^ie_i,$$ where
$e_1,\cdots,e_{n+1}$ is a basis of $V_n$. Then we have:

\begin{lemma}[\cite{DAS}]\label{comre}
Any finite-dimensional  $n$-Lie module of $V_n$ is completely
reducible.
\end{lemma}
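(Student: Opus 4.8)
The plan is to reduce the statement to the complete reducibility of modules over an ordinary Lie algebra, for which Weyl's theorem is available. The key observation is that an $n$-Lie module $M$ over $V_n$ carries, for each choice of $n-1$ elements $x_1,\dots,x_{n-1}\in V_n$, an endomorphism $D(x_1,\dots,x_{n-1})\colon m\mapsto [x_1,\dots,x_{n-1},m]$, and the second module axiom says precisely that these endomorphisms close under commutator into the Lie algebra of inner derivations of $V_n$. So first I would identify the relevant Lie algebra: let $\mathfrak{g}_0=\mathrm{Inder}(V_n)$ be the inner derivation algebra of $V_n$ spanned by the operators $\mathrm{ad}(x_1,\dots,x_{n-1})$ acting on $V_n$. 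Since $V_n$ is simple, $\mathfrak{g}_0$ is a (semi)simple Lie algebra — indeed it is well known that for $V_n$ realized on a basis $e_1,\dots,e_{n+1}$ one has $\mathrm{Inder}(V_n)\cong\mathfrak{so}(n+1,\mathbb{C})$ (or the appropriate split form), which is semisimple for $n\geq 3$. The module axioms then make $M$ into a genuine $\mathfrak{g}_0$-module via $\mathrm{ad}(x_1,\dots,x_{n-1})\cdot m:=[x_1,\dots,x_{n-1},m]$; one must check this is well defined, i.e.\ that a relation $\sum \mathrm{ad}(x^{(k)}_1,\dots,x^{(k)}_{n-1})=0$ as operators on $V_n$ forces the same relation on $M$.

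Granting that, the next step is the main point: a subspace $N\subseteq M$ is an $n$-Lie submodule if and only if it is a $\mathfrak{g}_0$-submodule, because $n$-Lie submodules are by definition stable under all the operators $[x_1,\dots,x_{n-1},-]$, which are exactly the operators through which $\mathfrak{g}_0$ acts. Hence the lattice of $n$-Lie submodules of $M$ coincides with the lattice of $\mathfrak{g}_0$-submodules. By Weyl's complete reducibility theorem, the finite-dimensional $\mathfrak{g}_0$-module $M$ decomposes as a direct sum of irreducible $\mathfrak{g}_0$-submodules; by the previous sentence these are exactly irreducible $n$-Lie submodules, so $M$ is completely reducible as an $n$-Lie module.

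The step I expect to be the genuine obstacle is establishing that the assignment $m\mapsto [x_1,\dots,x_{n-1},m]$ descends to a well-defined action of the abstract Lie algebra $\mathfrak{g}_0=\mathrm{Inder}(V_n)$ — equivalently, that the kernel of the map $\Lambda^{n-1}V_n\to\mathrm{End}(V_n)$ giving inner derivations also annihilates $M$. This requires using the third module axiom (the one with $[x_1,\dots,x_{n-2},[x_n,\dots,x_{2n-1}],m]$), which encodes compatibility of the bracket on $V_n$ with the action on $M$, together with simplicity of $V_n$; concretely, simplicity forces $V_n$ itself to be an irreducible faithful module over $\mathrm{Inder}(V_n)$, and one leverages this to transfer relations. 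Once well-definedness is in hand, the rest is formal: semisimplicity of $\mathrm{Inder}(V_n)$ plus Weyl. (Alternatively, one can avoid the well-definedness issue by working directly with the semidirect sum $V_n+M$ as an $n$-Lie algebra and invoking the Levi-type decomposition of Lemma \ref{l1} together with Lemma \ref{l2}, but the Lie-theoretic route above is the cleanest.)
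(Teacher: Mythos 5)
The paper does not actually prove this lemma --- it is quoted from \cite{DAS} --- so there is no in-paper argument to compare against; judged on its own, your reduction is the right one and is essentially the standard proof. Two remarks on the details. First, you correctly isolate well-definedness of the $\mathrm{Inder}(V_n)$-action as the point needing care, but you misplace its resolution: it does not require the third module axiom or an abstract appeal to simplicity. For the vector product algebra the map $\Lambda^{n-1}V_n\to\mathrm{End}(V_n)$ sending $e_{i_1}\wedge\cdots\wedge e_{i_{n-1}}$ to $\mathrm{ad}(e_{i_1},\ldots,e_{i_{n-1}})$ is \emph{injective} (each such operator is, up to sign, the elementary skew-symmetric matrix supported on the two coordinates complementary to $\{i_1,\ldots,i_{n-1}\}$, and these $\binom{n+1}{2}$ operators are linearly independent), with image exactly $\mathfrak{so}(n+1,\mathbb{C})$. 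So there are no relations to transfer; the second module axiom together with the fundamental identity on $V_n$ then says directly that $X\mapsto [X,\,\cdot\,]|_M$ is a Lie algebra homomorphism $\mathfrak{so}(n+1,\mathbb{C})\to\mathfrak{gl}(M)$, and Weyl's theorem plus the coincidence of the two submodule lattices finishes the argument exactly as you say. Second, your parenthetical alternative via Lemma \ref{l1} and Lemma \ref{l2} does not work: the Levi-type decomposition of $V_n+M$ only exhibits $M$ as the $n$-radical complemented by a copy of $V_n$; it says nothing about $M$ splitting into irreducible submodules, which is the content of the lemma. Drop that aside and make the injectivity computation explicit, and the proof is complete.
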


\begin{lemma}[\cite{DAS}]\label{weight}
The dimension of any irreducible $n$-Lie module of $V_n$ with
highest weight $t$ is equal to
$\frac{n+2t-1}{n+t-1}\binom{n+t-1}{t}$.
\end{lemma}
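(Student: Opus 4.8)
The plan is to reduce the classification of the irreducible $n$-Lie modules of $V_n$ to the representation theory of its inner derivation algebra, and then to read off the dimension. First I would realize $V_n$ as a cross-product algebra on $W:=\mathbb C^{n+1}$: after a suitable choice of a non-degenerate symmetric bilinear form $B$ and a volume form on $W$, the bracket is $[x_1,\dots,x_n]=v$, where $v$ is determined by $B(v,y)=\det(x_1,\dots,x_n,y)$. A direct computation then shows that every inner derivation $\mathrm{ad}(x_1,\dots,x_{n-1})\colon y\mapsto[x_1,\dots,x_{n-1},y]$ is skew with respect to $B$, and that $x_1\wedge\cdots\wedge x_{n-1}\mapsto\mathrm{ad}(x_1,\dots,x_{n-1})$ is an isomorphism $\Lambda^{n-1}V_n\cong\mathfrak{so}(n+1,\mathbb C)$ (both sides have dimension $\binom{n+1}{2}$). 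Hence the inner derivation algebra of $V_n$ is $\mathfrak g:=\mathfrak{so}(n+1,\mathbb C)$, acting on $V_n\cong W$ as the standard vector representation.

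Next I would note that any $n$-Lie module $M$ of $V_n$ is in particular a $\mathfrak g$-module, via $\mathrm{ad}(x_1,\dots,x_{n-1})\cdot m=[x_1,\dots,x_{n-1},m]$ (well defined on $\Lambda^{n-1}V_n$ by the antisymmetry of the bracket): the second of the displayed $n$-Lie module identities says exactly that this is a Lie-module action, its right-hand side encoding the Lie bracket on $\Lambda^{n-1}V_n\cong\mathfrak g$. Since an $n$-Lie submodule of $M$ is the same thing as a $\mathfrak g$-submodule, $M$ is irreducible as an $n$-Lie module if and only if it is irreducible as a $\mathfrak g$-module, and by Lemma~\ref{comre} it is completely reducible. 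It then remains to determine which irreducible $\mathfrak g$-modules actually arise, and this is controlled by the third displayed $n$-Lie module identity: it contains the bracket $[x_n,\dots,x_{2n-1}]\in V_n$ and so couples the cross product $\Lambda^n V_n\cong W$ to the module action. Evaluating that identity on basis vectors $e_i$ produces a quadratic relation among the operators representing the rotation generators $E_{ij}-E_{ji}$; I claim it forces $M$ to be a direct sum of copies of the traceless symmetric powers $\mathrm{Sym}^t_0 W$, that is, of the irreducible $\mathfrak{so}(n+1,\mathbb C)$-modules whose highest weight is $t$ times that of $W$, for $t\ge 0$, and conversely each $\mathrm{Sym}^t_0 W$ — realized as the space of degree-$t$ harmonic polynomials on $\mathbb C^{n+1}$ — does carry a compatible $n$-Lie module structure. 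I expect the main obstacle to be exactly this identification: showing that the quadratic relation from the third Jacobi identity admits as solutions precisely the harmonic modules, and matching the parameter $t$ with the ``highest weight $t$'' of the statement. This needs a careful bookkeeping of signs together with a standard highest-weight-vector argument, but no new idea seems to be needed once the reduction to $\mathfrak{so}(n+1,\mathbb C)$ is in place.

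Granting the identification $M\cong\mathrm{Sym}^t_0 W$, the dimension follows from the standard decomposition $\mathrm{Sym}^t W=\mathrm{Sym}^t_0 W\oplus q\cdot\mathrm{Sym}^{t-2}W$, where $q$ is the invariant quadratic form on $W$, giving
\[
\dim M=\binom{n+t}{t}-\binom{n+t-2}{t-2}.
\]
A short computation, using $(n+t)(n+t-1)-t(t-1)=n(n+2t-1)$, rewrites the right-hand side as $\dfrac{n+2t-1}{n+t-1}\binom{n+t-1}{t}$, which is the asserted formula. (Equivalently, this is the Weyl dimension formula for the $\mathfrak{so}(n+1,\mathbb C)$-module of highest weight $(t,0,\dots,0)$, or the classical count of degree-$t$ spherical harmonics on $S^n$.) The cases $t=0$, where $\dim M=1$ and $M$ is the trivial module, and $t=1$, where $\dim M=n+1$ and $M\cong V_n$, serve as checks.
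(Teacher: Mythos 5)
First, note that the paper itself gives no proof of this lemma: it is quoted directly from Dzhumadil'daev \cite{DAS}, so the only meaningful comparison is with that reference. Your plan is, in outline, exactly the argument of \cite{DAS}: identify the inner derivation algebra of $V_n$ with $\mathfrak{so}(n+1,\mathbb C)$ acting on $W=\mathbb C^{n+1}$ via $\Lambda^{n-1}V_n\cong\mathfrak{so}(n+1,\mathbb C)$, observe that an $n$-Lie module is an $\mathfrak{so}(n+1,\mathbb C)$-module subject to one extra constraint coming from the defining identity that involves the full bracket $[x_n,\dots,x_{2n-1}]$, argue that this constraint singles out the harmonic modules $\mathrm{Sym}^t_0W$, and then compute dimensions. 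Your closing computation is correct: $\binom{n+t}{t}-\binom{n+t-2}{t-2}=\frac{n+2t-1}{n+t-1}\binom{n+t-1}{t}$ follows from $(n+t)(n+t-1)-t(t-1)=n(n+2t-1)$, and this is indeed the count of degree-$t$ harmonic polynomials on $\mathbb C^{n+1}$, matching the Weyl dimension formula for highest weight $t\omega_1$.

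The genuine gap is the step you yourself flag as the ``main obstacle'': the claim that the quadratic relation arising from the third module identity forces $M$ to be a direct sum of copies of $\mathrm{Sym}^t_0W$ is precisely the content of the classification, and you give no argument for it. As written, you have only shown that an irreducible $n$-Lie module restricts to an irreducible $\mathfrak{so}(n+1,\mathbb C)$-module; without excluding all the other dominant weights (the exterior powers $\Lambda^kW$, the spin representations, and the general mixed weights), the phrase ``highest weight $t$'' in the statement is not pinned down and the dimension formula does not follow. Closing this requires evaluating the extra identity on a highest weight vector and showing it annihilates every dominant weight not of the form $t\omega_1$, and conversely checking that the harmonic modules do satisfy it. Everything else in your plan is sound and the arithmetic at the end is right, but this omitted identification is where essentially all of the work of \cite{DAS} lives.
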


\begin{coro}\label{coro} Let $V$ be an $n$-Lie module of $V_n$ in dimension 2. Then $V=V_1\oplus V_2$, where $V_1,V_2$ are 1-dimensional $n$-Lie modules of $V_n$.
\end{coro}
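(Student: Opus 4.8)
The strategy is to reduce the statement to a dimension count among the irreducible $n$-Lie modules of $V_n$. First I would apply Lemma~\ref{comre} to decompose $V$ as a direct sum $V = W_1 \oplus \cdots \oplus W_k$ of irreducible $n$-Lie submodules of $V_n$, each $W_j$ having some highest weight $t_j \geqslant 0$.

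Then I would read off the possible dimensions from Lemma~\ref{weight}. Writing $f(t) = \frac{n+2t-1}{n+t-1}\binom{n+t-1}{t}$ for the dimension of the irreducible module of highest weight $t$, one computes $f(0) = 1$ and $f(1) = n+1$. For $t \geqslant 1$ the factor $\binom{n+t-1}{t} = \binom{n+t-1}{n-1}$ is at least $n$ and the factor $\frac{n+2t-1}{n+t-1} = 1 + \frac{t}{n+t-1}$ is strictly greater than $1$, so $f(t) > n \geqslant 2$; since $f(t)$ is an integer, $f(t) \geqslant 3$ for all $t \geqslant 1$. Consequently the only irreducible $n$-Lie modules of $V_n$ of dimension at most $2$ are the $1$-dimensional ones (those of highest weight $0$), and in particular there is no irreducible module of dimension exactly $2$.

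Finally, since $\dim W_1 + \cdots + \dim W_k = \dim V = 2$ and each $\dim W_j$ is either $1$ or at least $3$, we are forced to have $k = 2$ and $\dim W_1 = \dim W_2 = 1$; taking $V_1 = W_1$ and $V_2 = W_2$ completes the argument. The only point requiring any care is the inequality $f(t) \geqslant 3$ for $t \geqslant 1$, i.e.\ checking that $V_n$ admits no two-dimensional irreducible module; once that is established, the conclusion is immediate from complete reducibility.
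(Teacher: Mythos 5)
Your proof is correct and follows essentially the same route as the paper: both use Lemma~\ref{weight} to rule out a $2$-dimensional irreducible module (the paper rewrites the dimension as $\frac{n+2t-1}{t}\binom{n+t-2}{t-1}>2$ for $t>0$, you bound the two factors separately) and then invoke complete reducibility from Lemma~\ref{comre}. The only cosmetic difference is that you decompose first and then bound each summand, while the paper first shows $V$ cannot be irreducible and then decomposes.
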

\begin{proof} Assume that $V$ is an irreducible $n$-Lie module of $V_n$ with highest weight $t$. By Lemma~\ref{weight}, $t\not=0$. But for $t>0$ and $n\geqslant 3$,
\begin{equation*}
\dim V=\frac{n+2t-1}{n+t-1}\binom{n+t-1}{t}=\frac{n+2t-1}{t}\binom{n+t-2}{t-1}>2.
\end{equation*}
It contradicts $\dim V=2$. Then $V$ is reducible. By Lemma~\ref{comre}, we have $V=V_1\oplus V_2$.
\end{proof}

\begin{theorem}\label{myt}
Let $\mathfrak g$ be an $(n+3)$-dimensional $n$-Lie algebra over the
complex numbers field. Then $\dim \mathfrak g^1 \leqslant n + 2$.
\end{theorem}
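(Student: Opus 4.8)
The plan is to argue by contradiction, so assume $\dim\mathfrak g^{1}=n+3$, i.e.\ $\mathfrak g^{1}=\mathfrak g$. If $\mathfrak g$ were $n$-solvable then $\mathfrak g^{1}=\mathfrak g^{(1,n)}$ would be a proper ideal, so $\mathfrak g$ is not $n$-solvable and its $n$-radical $\mathfrak r$ is a proper ideal. By Lemma~\ref{l1}, $\mathfrak g=\mathfrak s\oplus\mathfrak r$ with $\mathfrak s\neq 0$ strong semisimple, and by Lemma~\ref{l2} together with Theorem~\ref{T1} the subalgebra $\mathfrak s$ is a direct sum of copies of $V_{n}$, each of dimension $n+1$. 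Since $\dim\mathfrak g=n+3<2(n+1)$ there is exactly one copy, hence $\mathfrak s\cong V_{n}$ and $\dim\mathfrak r=2$.

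Next I would unwind $\mathfrak g^{1}$ through this splitting. Because $\mathfrak s$ is a subalgebra with $\mathfrak s^{1}=\mathfrak s$ and $\mathfrak r$ is an ideal, expanding $[x_{1},\dots,x_{n}]$ along $\mathfrak g=\mathfrak s\oplus\mathfrak r$ and discarding every term with three or more arguments in $\mathfrak r$ (which vanishes since $\dim\mathfrak r=2<n$) yields
\[
\mathfrak g^{1}=\mathfrak s\oplus\bigl([\mathfrak r,\mathfrak s,\dots,\mathfrak s]+[\mathfrak r,\mathfrak r,\mathfrak s,\dots,\mathfrak s]\bigr).
\]
Thus $\mathfrak g^{1}=\mathfrak g$ is equivalent to $[\mathfrak r,\mathfrak s,\dots,\mathfrak s]+[\mathfrak r,\mathfrak r,\mathfrak s,\dots,\mathfrak s]=\mathfrak r$, and the idea is to derive a contradiction by proving that \emph{both} of these subspaces of $\mathfrak r$ are $0$. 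Granting for a moment the key fact $[\mathfrak r,\mathfrak r,\mathfrak s,\dots,\mathfrak s]=0$: then every bracket on $\mathfrak g$ with at least two arguments in $\mathfrak r$ vanishes, so $\mathfrak r$ becomes a $2$-dimensional $n$-Lie module over $\mathfrak s\cong V_{n}$; by Corollary~\ref{coro} it is a direct sum of two $1$-dimensional modules, and since $\mathfrak r$ is in particular a module over the semisimple Lie algebra of inner derivations of $V_{n}$, the action on each of these $1$-dimensional pieces is trivial. Hence $[\mathfrak r,\mathfrak s,\dots,\mathfrak s]=0$ as well, so $\mathfrak g^{1}=\mathfrak s\subsetneq\mathfrak g$, a contradiction.

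Everything therefore reduces to the key claim $[\mathfrak r,\mathfrak r,\mathfrak s,\dots,\mathfrak s]=0$, and I expect this to be the main obstacle: a priori $\mathfrak r$ is only an ideal, and although $\mathfrak r$ is $n$-abelian ($[\mathfrak r,\dots,\mathfrak r]=0$), for $n>2$ this does not kill brackets with exactly two arguments in $\mathfrak r$. To prove the claim I would fix a basis $\{u,v\}$ of $\mathfrak r$ and consider the alternating multilinear map $\beta\colon\Lambda^{n-2}\mathfrak s\to\mathfrak r$ given by $\beta(s_{3}\wedge\cdots\wedge s_{n})=[u,v,s_{3},\dots,s_{n}]$, whose image is exactly $[\mathfrak r,\mathfrak r,\mathfrak s,\dots,\mathfrak s]$ (bilinearity and antisymmetry in the first two slots reduce general $[\mathfrak r,\mathfrak r,\mathfrak s,\dots,\mathfrak s]$ to this). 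For an inner derivation $D=\operatorname{ad}(x_{1},\dots,x_{n-1})$, $y\mapsto[x_{1},\dots,x_{n-1},y]$, with the $x_{i}\in\mathfrak s$, the elements $Du,Dv$ lie in $\mathfrak r$, so $[Du,v,s_{3},\dots,s_{n}]$ and $[u,Dv,s_{3},\dots,s_{n}]$ are scalar multiples of $\beta(s_{3}\wedge\cdots\wedge s_{n})$, and using the derivation rule one computes
\[
D\bigl(\beta(\xi)\bigr)=\operatorname{tr}(D|_{\mathfrak r})\,\beta(\xi)+\beta(D\cdot\xi),
\]
where $D\cdot\xi$ is the induced action of $D$ on $\Lambda^{n-2}\mathfrak s$. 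So $\beta$ is equivariant for $\operatorname{Inn}(\mathfrak s)$ up to the character $D\mapsto\operatorname{tr}(D|_{\mathfrak r})$; but $\operatorname{Inn}(V_{n})\cong\mathfrak{so}(n+1,\mathbb C)$ is semisimple, hence perfect, so all its elements act with trace $0$ on any module, the character vanishes, and $\beta$ is genuinely $\operatorname{Inn}(\mathfrak s)$-equivariant. Consequently $[\mathfrak r,\mathfrak r,\mathfrak s,\dots,\mathfrak s]=\operatorname{im}\beta$ is an $\operatorname{Inn}(\mathfrak s)$-submodule of $\mathfrak r$ which is a quotient of $\Lambda^{n-2}\mathfrak s\cong\Lambda^{3}V_{n}$; for $n\geq 3$ every irreducible constituent of this $\mathfrak{so}(n+1,\mathbb C)$-module has dimension at least $4$, so there is no nonzero quotient inside the $2$-dimensional space $\mathfrak r$. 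Hence $\beta=0$, which proves the claim and completes the proof. The hypothesis $n\geq 3$ is indispensable here and in Corollary~\ref{coro}: for $n=2$ the $5$-dimensional Lie algebra $\mathfrak{sl}_{2}\ltimes\mathbb C^{2}$ is perfect.
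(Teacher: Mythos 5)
Your proof is correct, and its skeleton is the same as the paper's: use Lemmas \ref{l1}, \ref{l2} and Theorem \ref{T1} to reduce to $\mathfrak g=V_n\oplus\mathfrak r$ with $\dim\mathfrak r=2$, then show every bracket touching $\mathfrak r$ vanishes. Where you genuinely diverge is in the treatment of the brackets with two arguments in $\mathfrak r$, and in the order of the two steps. The paper first applies Corollary \ref{coro} to the action of $V_n$ on $\mathfrak r$ to obtain the relations (\ref{zero}), and then kills $[e_{i_1},\dots,e_{i_{n-2}},e_{n+2},e_{n+3}]$ by an elementary Jacobi-identity computation (substituting $e_1=-[e_2,\dots,e_{n+1}]$ and expanding), staying entirely inside the $n$-Lie formalism. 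You instead kill the two-$\mathfrak r$ brackets first, via the map $\beta\colon\Lambda^{n-2}V_n\to\mathfrak r$, its equivariance up to the trace character, perfectness of $\mathfrak{so}(n+1,\mathbb C)$ to annihilate that character, and the fact that $\Lambda^{n-2}\mathbb C^{n+1}\cong\Lambda^{3}\mathbb C^{n+1}$ has no irreducible constituent of dimension $\leqslant 2$. Your ordering is arguably cleaner: once those brackets vanish, $\mathfrak r$ is literally an $n$-Lie module in the sense of Section 2 and Corollary \ref{coro} applies verbatim, whereas the paper invokes the module machinery before this is known (harmless, since the module axioms involve only one module argument per bracket, but left implicit). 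The cost is heavier, external input from the representation theory of $\mathfrak{so}(n+1,\mathbb C)$, and one point you should spell out: that the span of the operators $\mathrm{ad}(x_1,\dots,x_{n-1})|_{\mathfrak g}$ with $x_i\in V_n$ is isomorphic as a Lie algebra to $\mathfrak{so}(n+1,\mathbb C)$ (restriction to $V_n$ is surjective onto the inner derivation algebra, and $\dim\Lambda^{n-1}\mathbb C^{n+1}=\binom{n+1}{2}$ forces it to be an isomorphism), since it is this operator algebra, not the abstract $\mathrm{Inn}(V_n)$, whose perfectness gives $\mathrm{tr}(D|_{\mathfrak r})=0$.
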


\begin{proof}
If $\mathfrak g$ is $n$-solvable, then the conclusion is obvious.
Assume that $\mathfrak g$ is not $n$-solvable. Then $\mathfrak g =
V_n \oplus \mathfrak r$ by Lemma \ref{l1} and Lemma \ref{l2}, where
$\mathfrak r$ is the $n$-radical of $\mathfrak g$ in dimension 2.

Clearly, $\mathfrak r$ is an $n$-Lie module of $V_n$. By Corollary
\ref{coro}, $\mathfrak r = V_1 \oplus V_2 $, where $V_1,V_2$ are the
1-dimensional $n$-Lie module of $V_n$ with highest weight $0$. It
follows that there is a basis of $e_1, \cdots, e_{n+3}$ of
$\mathfrak g$, such that $e_1, \cdots, e_{n+1} \in V_n$, $e_{n+2}
\in V_1$, $e_{n+3} \in V_2$, and
\begin{equation}\label{zero}
\begin{split}
&[e_1,  \cdots,  \hat{e_i},  \cdots  e_{n+1}]=(-1)^{i} e_i, \quad 1\leqslant i \leqslant n+1, \\
&[e_1, \cdots, \hat{e_i}, \cdots, \hat{e_j}, \cdots, e_{n+1}, e_{n+2}]=0, \quad 1\leqslant i < j\leqslant n+1, \\
&[e_1, \cdots, \hat{e_i}, \cdots, \hat{e_j}, \cdots, e_{n+1}, e_{n+3}]=0, \quad 1\leqslant i < j\leqslant n+1.
\end{split}
\end{equation}

By Jacobi identity, the equations (\ref{zero}) and the fact that
$\mathfrak r$ is an ideal of $\mathfrak g$, we have that
\begin{equation*}
\begin{split}
 [e_1, \cdots, e_{n-2}, e_{n+2}, e_{n+3}] = &  -[[e_2, \cdots, e_{n+1}], e_2, \cdots, e_{n-2}, e_{n+2}, e_{n+3}] \\
= &  -[e_2, \cdots, e_n, [e_{n+1}, e_2, \cdots, e_{n-2}, e_{n+2}, e_{n+3}]] \\
&+\sum_{i=2}^{n-2} [e_{n+1},\cdots,[e_2,\cdots,e_n,e_i],\cdots,e_{n-2},e_{n+2},e_{n+3}] \\
& +  [e_{n+1}, e_2, \cdots, e_{n-2}, [e_2, \cdots, e_n, e_{n+2}], e_{n+3}]\\
& + [e_{n+1}, e_2, \cdots, e_{n-2}, e_{n+2}, [e_2, \cdots, e_n, e_{n+3}]] \\
 =&\ 0.
\end{split}
\end{equation*}

Similarly, $[e_{i_1}, \cdots, e_{i_{n-2}}, e_{n+2}, e_{n+3}]=0$ for
all  $1< i_1< \cdots < i_{n-2}<{n+1}$. That is, $\mathfrak g^1
=V_n$. Namely $\dim \mathfrak g^1=n+1 $.
\end{proof}

\section{classification of $(n+3)$-dimensional metric $n$-Lie
algebra}\label{s3}

By Lemma \ref{2} and Theorem \ref{myt}, it is easy to know the
center of an $(n+3)$-dimensional metric $n$-Lie algebra is nonzero.
Moreover,

\begin{lemma}\label{l3}
The center of a non-abelian $(n+3)$-dimensional metric $n$-Lie
algebra is 1- or 2- dimensional.
\end{lemma}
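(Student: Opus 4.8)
The plan is to work with the metric structure directly. Let $\mathfrak g$ be a non-abelian $(n+3)$-dimensional metric $n$-Lie algebra. By Lemma~\ref{2} we have $C(\mathfrak g)=(\mathfrak g^1)^\bot$, so $\dim C(\mathfrak g)+\dim \mathfrak g^1=n+3$. Since $\mathfrak g$ is non-abelian, $\dim\mathfrak g^1\geqslant 1$, hence $\dim C(\mathfrak g)\leqslant n+2$. On the other hand, Theorem~\ref{myt} (applied after complexification, which does not change dimensions of $\mathfrak g^1$ since the bracket is defined over $\mathbb R$) gives $\dim\mathfrak g^1\leqslant n+2$, so $\dim C(\mathfrak g)\geqslant 1$. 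This already recovers the observation preceding the lemma; the work is to rule out $\dim C(\mathfrak g)\in\{3,\dots,n+2\}$, equivalently $\dim\mathfrak g^1\in\{1,\dots,n-1\}$.

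First I would dispose of the non-$n$-solvable case. If $\mathfrak g$ is not $n$-solvable, then as in the proof of Theorem~\ref{myt} its complexification decomposes as $V_n\oplus\mathfrak r$ with $\dim\mathfrak r=2$, and the computation there shows $\mathfrak g^1=V_n$, so $\dim\mathfrak g^1=n+1$ and $\dim C(\mathfrak g)=2$; this case is fine. So assume $\mathfrak g$ is $n$-solvable and suppose for contradiction that $\dim\mathfrak g^1=k$ with $1\leqslant k\leqslant n-1$. Choose a basis $x_1,\dots,x_{n+3}$ adapted to the metric and to $C(\mathfrak g)=(\mathfrak g^1)^\bot$, so that $\mathfrak g^1=\operatorname{span}\{x_1,\dots,x_k\}$ with $k\leqslant n-1$. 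The key structural fact is that any nonzero bracket $[y_1,\dots,y_n]$ lies in $\mathfrak g^1$, which is $(n-1)$-dimensional or smaller; combined with the invariance identity (\ref{def2}), $\langle[y_1,\dots,y_{n-1},\cdot],\cdot\rangle$ is a skew pairing, and $\operatorname{ad}(y_1,\dots,y_{n-1})$ maps $\mathfrak g$ into $\mathfrak g^1$ with kernel containing $C(\mathfrak g)=(\mathfrak g^1)^\bot$. I would argue that since $\dim\mathfrak g^1\leqslant n-1$, every bracket of $n$ elements must vanish: in any bracket $[y_1,\dots,y_n]$ we may use multilinearity to assume each $y_i$ is a basis vector; if all $n$ of them lie outside $\mathfrak g^1$ one derives, via the Jacobi identity exactly as in the displayed computation in the proof of Theorem~\ref{myt} (rewriting one argument as a bracket of elements of $\mathfrak g^1$-complement and pushing the inner bracket out), that the result is $0$; and if at least one $y_i\in\mathfrak g^1$, then using $\mathfrak g^1=(C(\mathfrak g))^\bot$ and the fact that $C(\mathfrak g)$ has dimension $\geqslant n-k\geqslant 1$ together with the invariance, one again forces the bracket into a lower-dimensional space until it collapses. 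This would give $\mathfrak g^1=0$, contradicting non-abelianness.

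I expect the genuine obstacle to be making the induction/counting in that last step precise — i.e. showing cleanly that $\dim\mathfrak g^1\leqslant n-1$ is incompatible with $\mathfrak g^1\neq 0$ for a metric $n$-Lie algebra of this dimension. The cleanest route is probably: pick $0\neq z\in\mathfrak g^1$; by non-degeneracy there is $w$ with $\langle z,w\rangle\neq0$, and $w\notin C(\mathfrak g)$ since $C(\mathfrak g)=(\mathfrak g^1)^\bot\ni$ would force $\langle z,w\rangle=0$; so there exist $y_1,\dots,y_{n-1}$ with $[y_1,\dots,y_{n-1},w]\neq0$. Now count: the map $w\mapsto[y_1,\dots,y_{n-1},w]$ has image inside $\mathfrak g^1$ and kernel containing $(\mathfrak g^1)^\bot$, which has codimension $k=\dim\mathfrak g^1$ in $\mathfrak g$; but $\dim\mathfrak g/(\mathfrak g^1)^\bot=k$, so the map factors through a $k$-dimensional space mapping to a $k$-dimensional space, and one extracts from (\ref{def1}) applied with $n-1\geqslant k+1$ "slots" — more inner slots than the dimension of $\mathfrak g^1$ — a skew multilinear alternating form on $\mathfrak g^1$ of degree exceeding $\dim\mathfrak g^1$, forcing it to vanish. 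Formalizing this degree-versus-dimension collapse is the delicate part; everything else (the non-$n$-solvable case, complexification, assembling the final contradiction) is routine.
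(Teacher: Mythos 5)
There is a genuine gap here, together with an off-by-one error in your reduction. Since $C(\mathfrak g)=(\mathfrak g^1)^\bot$ gives $\dim C(\mathfrak g)+\dim\mathfrak g^1=n+3$, ruling out $\dim C(\mathfrak g)\geqslant 3$ means ruling out $\dim\mathfrak g^1\in\{1,\dots,n\}$, not $\{1,\dots,n-1\}$; you run your contradiction only under the hypothesis $k\leqslant n-1$, so the case $\dim C(\mathfrak g)=3$, $\dim\mathfrak g^1=n$ is never addressed, and there your ``degree exceeds dimension'' collapse does not apply as stated (an alternating $n$-form on an $n$-dimensional space need not vanish; one only gets $\dim\mathfrak g^1\leqslant 1$, which still contradicts $\dim\mathfrak g^1=n\geqslant 3$, but that has to be said). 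More seriously, the central vanishing argument is never actually carried out: the first version, ``via the Jacobi identity exactly as in the displayed computation in the proof of Theorem~\ref{myt}'', is not valid, because that computation depends on the specific decomposition $V_n\oplus\mathfrak r$ and there is no general reason a bracket of elements outside $\mathfrak g^1$ should vanish; and the second version puts the alternating form on the wrong space (it lives on $\mathfrak g/C(\mathfrak g)$, not on $\mathfrak g^1$) and defers the key step as ``delicate''. It is not delicate, and no Jacobi identity, invariance identity (\ref{def2}), complexification, or solvable/non-solvable dichotomy is needed: since any bracket with a central argument is zero, the $n$-bracket descends to an alternating $n$-linear map on $\mathfrak g/C(\mathfrak g)$, a space of dimension $\dim\mathfrak g^1$ by Lemma~\ref{2}; hence $\dim\mathfrak g^1\leqslant\binom{\dim\mathfrak g^1}{n}$, which forces $\mathfrak g^1=0$ when $\dim\mathfrak g^1<n$ and $\dim\mathfrak g^1\leqslant 1$ when $\dim\mathfrak g^1=n$, contradicting non-abelianness in either case. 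This counting on a complement of the center is exactly the paper's (very terse) proof; your final paragraph contains the germ of it, but as written the proof is not complete.
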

\begin{proof}
Clearly, $\dim C({\mathfrak g}) \geqslant 1$. If $\dim C({\mathfrak
g}) \geqslant 3$, then $\dim {\mathfrak g}^1\leqslant 1$. It
contradicts Lemma \ref{2}.
\end{proof}

Now let $\mathfrak g$ be an $(n+3)$-dimensional metric $n$-Lie
algebra. If $C(\mathfrak g)$ is non-isotropic, then there exists
$x\in C(\mathfrak g)$ such that $\langle x, x \rangle \neq 0$. Then
the decomposition ${\mathfrak g}={\mathbb R}x\oplus x^\bot$ is a
direct sum of non-degenerate ideals. In general, there is a
well-known fact:

\begin{lemma}
Let ${\mathfrak g}$ be a metric $n$-Lie algebra with a non-isotropic
center. Then $\mathfrak g$ is a direct sum ${\mathfrak g}={\mathfrak
g}_a\oplus {\mathfrak g}_i$, where ${\mathfrak g}_a$ is an abelian
ideal with non-degenerate restriction $\langle, \rangle$ on it and
${\mathfrak g}_i$ is an ideal with an isotropic center $C({\mathfrak
g}_i)$.
\end{lemma}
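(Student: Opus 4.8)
The plan is to decompose $C(\mathfrak g)$ into its radical (the part on which $\langle,\rangle$ degenerates) and a non-degenerate complement, and then peel off the non-degenerate complement as an abelian orthogonal summand. First I would set $C = C(\mathfrak g)$ and consider the restriction of $\langle,\rangle$ to $C$. Let $C_0 = C \cap C^\bot$ be the radical of this restriction. By hypothesis $C$ is non-isotropic, so $C_0 \subsetneq C$; pick a subspace $\mathfrak a \subseteq C$ with $C = C_0 \oplus \mathfrak a$ and $\langle,\rangle|_{\mathfrak a}$ non-degenerate (this is just linear algebra: extend a basis of $C_0$ inside $C$ and use non-degeneracy on a complement, or diagonalize the Gram matrix of $C$).

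Next I would check that $\mathfrak a$ is a non-degenerate ideal of $\mathfrak g$. Since $\mathfrak a \subseteq C$, every bracket containing an element of $\mathfrak a$ vanishes, so $\mathfrak a$ is automatically an ideal (indeed an abelian one) and also $[\mathfrak a, \mathfrak g, \ldots, \mathfrak g] = 0$. Non-degeneracy of $\langle,\rangle|_{\mathfrak a}$ then gives, by the remark in the excerpt preceding Lemma~\ref{2}, the orthogonal direct sum decomposition $\mathfrak g = \mathfrak a \oplus \mathfrak a^\bot$ as ideals, and the restriction of $\langle,\rangle$ to each summand is non-degenerate. Set $\mathfrak g_a = \mathfrak a$, which is abelian with non-degenerate inner product, and $\mathfrak g_i = \mathfrak a^\bot$, a metric $n$-Lie algebra in its own right.

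It remains to show that the center of $\mathfrak g_i = \mathfrak a^\bot$ is isotropic, i.e. $C(\mathfrak g_i) \subseteq C(\mathfrak g_i)^{\bot_{\mathfrak g_i}}$. The key point is that $C(\mathfrak g_i) = C_0$: on one hand $C_0 \subseteq C \subseteq C(\mathfrak g)$ and $C_0 \perp \mathfrak a$, so $C_0 \subseteq \mathfrak a^\bot = \mathfrak g_i$ and $C_0$ is central in $\mathfrak g_i$; conversely, since $\mathfrak g_i$ is an ideal of $\mathfrak g$ and $\mathfrak a$ is central, any element central in $\mathfrak g_i$ is central in $\mathfrak g = \mathfrak g_i \oplus \mathfrak a$, hence lies in $C \cap \mathfrak a^\bot = (C_0 \oplus \mathfrak a) \cap \mathfrak a^\bot = C_0$. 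Now for $u, v \in C_0$ we have $\langle u, v \rangle = 0$ because $C_0 = C \cap C^\bot$ is by construction isotropic inside $C$ (as a radical of a symmetric form, $\langle C_0, C_0\rangle = 0$). Since the inner product on $\mathfrak g_i$ is the restriction of the one on $\mathfrak g$, this shows $C(\mathfrak g_i) = C_0$ is isotropic, completing the proof.

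The only mildly delicate step is the identification $C(\mathfrak g_i) = C_0$, which relies on the direct-sum structure $\mathfrak g = \mathfrak g_a \oplus \mathfrak g_i$ to transfer centrality back and forth; everything else is routine linear algebra together with the already-stated equivalence between non-degenerate ideals and orthogonal decompositions. I do not expect a genuine obstacle here.
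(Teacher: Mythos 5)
Your proof is correct. The paper itself omits the proof of this lemma, offering only the hint that a central element $x$ with $\langle x,x\rangle\neq 0$ splits off as ${\mathfrak g}={\mathbb R}x\oplus x^{\bot}$ (to be iterated); your argument is the same idea carried out in one step, by splitting $C({\mathfrak g})=C_0\oplus{\mathfrak a}$ into the radical of the restricted form and a non-degenerate complement and then peeling off ${\mathfrak a}$ via the stated criterion that a non-degenerate ideal yields an orthogonal decomposition. The one point that genuinely needs checking, the identification $C({\mathfrak g}_i)=C_0$, you handle correctly by transferring centrality across the direct sum, so the proof is complete as written.
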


\begin{lemma}\label{l4} Let $\mathfrak g$ be an $(n+3)$-dimensional metric
$n$-Lie algebra with an isotropic center. Then $\dim C({\mathfrak
g})=2$.
\end{lemma}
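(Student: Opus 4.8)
The plan is to show that $\dim C(\mathfrak g)=1$ is impossible; together with Lemma~\ref{l3} this forces $\dim C(\mathfrak g)=2$. Note first that $\mathfrak g$ is not abelian, since for an abelian metric $n$-Lie algebra the center is the whole (non-degenerate) space, which is not isotropic; hence Lemma~\ref{l3} applies and $\dim C(\mathfrak g)\in\{1,2\}$. If $\mathfrak g$ is not $n$-solvable, then applying Lemmas~\ref{l1} and~\ref{l2} to $\mathfrak g\otimes_{\mathbb R}{\mathbb C}$ and re-running the computation in the proof of Theorem~\ref{myt} shows $(\mathfrak g\otimes_{\mathbb R}{\mathbb C})^1=V_n$, so $\dim\mathfrak g^1=n+1$ and, by Lemma~\ref{2}, $\dim C(\mathfrak g)=2$; so we may assume $\mathfrak g$ is $n$-solvable and suppose, for contradiction, that $C(\mathfrak g)=\mathbb R z$ with $\langle z,z\rangle=0$. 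By Lemma~\ref{2}, $\mathfrak g^1=z^{\perp}$, so $\dim\mathfrak g^1=n+2$ and, $z$ being isotropic, $z\in z^{\perp}=\mathfrak g^1$.

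Choose $w\in\mathfrak g$ with $\langle z,w\rangle=1$; replacing $w$ by $w-\tfrac12\langle w,w\rangle z$ we may assume $\langle w,w\rangle=0$, so that $\operatorname{span}\{z,w\}$ is a hyperbolic plane. Set $U=\operatorname{span}\{z,w\}^{\perp}$: it is non-degenerate of dimension $n+1$, $\mathfrak g=\operatorname{span}\{z,w\}\oplus U$ orthogonally, and $\mathfrak g^1=z^{\perp}=\mathbb R z\oplus U$. Because $z$ is central and the bracket is alternating, every $n$-fold bracket in $\mathfrak g$ is a linear combination of brackets of the two shapes $[u_1,\dots,u_n]$ and $[w,u_1,\dots,u_{n-1}]$ with all $u_i\in U$; write $[u_1,\dots,u_n]=\tau z+\mu$ and $[w,u_1,\dots,u_{n-1}]=\sigma z+\nu$, where $\tau,\sigma$ are scalars and $\mu,\nu\in U$, all depending multilinearly and alternatingly on the $u_i$. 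Substituting these into the invariance identity~(\ref{def2}) with $y=w$ and with $y\in U$, and using $\langle w,w\rangle=\langle z,U\rangle=0$ and $\langle z,w\rangle=1$, one obtains: $\sigma\equiv0$; the map $(u_1,\dots,u_{n+1})\mapsto\langle\mu(u_1,\dots,u_n),u_{n+1}\rangle$ is alternating in all $n+1$ arguments from $U$; and $(u_1,\dots,u_n)\mapsto\langle\nu(u_1,\dots,u_{n-1}),u_n\rangle$ is alternating in all $n$ arguments from $U$.

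Here is the heart of the matter. An alternating $(n{+}1)$-form on the $(n{+}1)$-dimensional space $U$ is a scalar multiple of the metric volume form $\Omega_U$, so $\langle\mu(u_1,\dots,u_n),u_{n+1}\rangle=c\,\Omega_U(u_1,\dots,u_{n+1})$; equivalently $\mu$ equals $c$ times the $n$-ary vector product of the inner-product space $U$. If $c\neq0$, then the image of $U$ in $\mathfrak g/\mathbb R z$ (an ideal quotient, $\mathbb R z$ being central) is a subalgebra whose bracket is $c$ times the vector product; since that product is surjective, this subalgebra equals its own derived algebra and so is not $n$-solvable, contradicting that $\mathfrak g/\mathbb R z$ is $n$-solvable. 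Hence $\mu\equiv0$, and therefore $\mathfrak g^1$ is spanned by $z$ together with the values of $\nu$. Finally, the alternating $n$-form $(u_1,\dots,u_n)\mapsto\langle\nu(u_1,\dots,u_{n-1}),u_n\rangle$ on $U$ equals $\iota_{\xi}\Omega_U$ for a unique $\xi\in U$; taking $u_n=\xi$ gives $\langle\nu(u_1,\dots,u_{n-1}),\xi\rangle=0$ for all $u_i$, so the image of $\nu$ lies in $\{u\in U:\langle u,\xi\rangle=0\}$. If $\xi\neq0$ this is an $n$-dimensional hyperplane of $U$, so $\dim\mathfrak g^1\le n+1$; if $\xi=0$ then $\nu\equiv0$ (as $U$ is non-degenerate), so $\mathfrak g^1\subseteq\mathbb R z$ and $\dim\mathfrak g^1\le1$. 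Either way $\dim\mathfrak g^1<n+2$, a contradiction. Hence $\dim C(\mathfrak g)=2$.

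The step I expect to be the main obstacle is establishing $\mu\equiv0$: this is the only place where $n$-solvability is genuinely used, and it hinges on recognizing that a nonzero ``volume-form part'' of the bracket on $U$ produces a simple subalgebra inside $\mathfrak g/\mathbb R z$. An alternative route is to observe that $\mathfrak g^1/\mathbb R z$ is itself a metric $(n{+}1)$-dimensional $n$-Lie algebra and to invoke the classical fact --- readily recovered from the structure form of an $(n{+}1)$-dimensional $n$-Lie algebra --- that such an algebra is abelian or simple; being $n$-solvable, $\mathfrak g^1/\mathbb R z$ must then be abelian, which is exactly $\mu\equiv0$.
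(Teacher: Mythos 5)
Your proof is correct, and it takes a genuinely different route from the paper's. The paper argues entirely in coordinates: assuming a one-dimensional isotropic center $\mathbb{R}e_1$, it lists the structure constants that invariance of $\langle\,,\rangle$ permits, derives the relations (\ref{e1})--(\ref{e4}) among them, and concludes ``by induction'' from (\ref{e3}) and (\ref{e4}) that the brackets $c_{i,j}$ span at most an $(n+1)$-dimensional space, contradicting $\dim\mathfrak g^1=n+2$; no solvability enters, and the combinatorial endgame is left compressed. You instead split off the non-$n$-solvable case (settled by complexifying and rerunning the proof of Theorem \ref{myt}, which is legitimate since the derived series commutes with $\otimes_{\mathbb R}\mathbb C$), and in the $n$-solvable case you work invariantly: the hyperbolic-plane splitting $\mathfrak g=\operatorname{span}\{z,w\}\oplus U$ reduces the bracket to a top-degree alternating form on the non-degenerate $(n+1)$-dimensional space $U$ (necessarily a multiple of the volume form, i.e.\ of the vector product, which $n$-solvability of $\mathfrak g/\mathbb Rz$ forces to vanish) plus an $n$-form $\iota_\xi\Omega_U$ whose associated map has image in a hyperplane; either way $\dim\mathfrak g^1\leqslant n+1$. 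I checked the individual steps --- $\sigma\equiv0$, the full alternation of $(x_1,\dots,x_n,y)\mapsto\langle[x_1,\dots,x_n],y\rangle$ coming from (\ref{def2}), the surjectivity of a nonzero vector product, and the heredity of $n$-solvability to subalgebras and quotients --- and they all hold. What your version buys is a complete and conceptual endgame (the only possible obstruction to $\dim\mathfrak g^1\leqslant n+1$ is a simple vector-product subalgebra sitting inside $\mathfrak g/\mathbb Rz$), at the cost of invoking the solvable/non-solvable dichotomy and the representation-theoretic input behind Theorem \ref{myt}; the paper's version stays inside the metric structure alone but leaves the decisive induction on the reader.
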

\begin{proof} Since $\mathfrak g$ is a metric
$n$-Lie algebra with an isotropic center, we know $\mathfrak g$ is
non-abelian. By Lemma \ref{l3}, $\dim C(\mathfrak g)=1$ or $2$.
Assume that $C(\mathfrak g)$ is isotropic and $\dim C(\mathfrak
g)=1$. Let $e_1, e_2, \cdots, e_{n+3}$ be a basis of $\mathfrak g$
such that $e_1 \in C(\mathfrak g)$, $e_1, \cdots, e_{n+2} \in
{\mathfrak g}^1$, and
$$\langle e_1, e_{n+3} \rangle = 1, \quad\langle
e_i, e_i \rangle = \epsilon_i, \quad\forall 1< i< n+3 ,$$ where
$\epsilon_i$ are signs. Let $c_{i,j}$ denote $[e_2, \cdots,
\hat{e_i}, \cdots, \hat{e_j}, \cdots, e_{n+3}]$ for all $ 2\leqslant
i < j \leqslant n+3$. The nonzero brackets are only the following:
\begin{equation*}
  \begin{split}
   & c_{i,n+3}=[e_2, \cdots, \hat{e_i},  \cdots, e_{n+2}]=a_{i,n+3}^1 e_1 +
a_{i,n+3}^i e_i, \quad\forall 2\leqslant i \leqslant n+2, \\
   & c_{i,j}=[e_2, \cdots, \hat{e_i}, \cdots, \hat{e_j}, \cdots,
e_{n+3}]=a_{i,j}^i e_i + a_{i,j}^j e_j, \quad\forall  2\leqslant i <
j < n+3.
  \end{split}
\end{equation*}
By the definition, we have
\begin{equation*}
\begin{split}
\langle [e_2, \cdots, \hat{e_i},  \cdots, e_{n+2}] , e_{n+3} \rangle
+ (-1)^{n+2-j}\langle [e_2, \cdots, \hat{e_i}, \cdots, \hat{e_j},
\cdots,
e_{n+3}], e_j \rangle = 0, \\
\langle [e_2, \cdots, \hat{e_j},  \cdots, e_{n+2}] , e_{n+3} \rangle
+ (-1)^{n+1-i}\langle [e_2, \cdots, \hat{e_i}, \cdots, \hat{e_j},
\cdots, e_{n+3}], e_i \rangle = 0.
\end{split}
\end{equation*}
It follows that \begin{equation}\label{e1} a_{i,n+3}^1 +
(-1)^{n+2-j}\epsilon_j a_{i,j}^j=0, \quad a_{j,n+3}^1 +
(-1)^{n+1-i}\epsilon_i a_{i,j}^i=0.
\end{equation}
Similarly, we have
\begin{equation*}
\begin{split}
\langle [e_2, \cdots, \hat{e_i},  \cdots e_{n+2}] , e_i \rangle
   &=(-1)^{n+2-j} \langle [e_2, \cdots, \hat{e_i}, \cdots, \hat{e_j},
   \cdots,
e_{n+2}, e_j], e_i \rangle \\
   &=(-1)^{n+1-j} \langle [e_2, \cdots, \hat{e_i}, \cdots, \hat{e_j},
   \cdots,
e_{n+2}, e_i], e_j \rangle \\
   &=(-1)^{n+1-j} (-1)^{n+1-i}\langle [e_2, \cdots, \hat{e_j},
   \cdots,
e_{n+2} ], e_j \rangle \\
   &=(-1)^{i+j}\langle [e_2, \cdots, \hat{e_j}, \cdots,
e_{n+2} ], e_j \rangle.
\end{split}
\end{equation*}
That is, \begin{equation}\label{e2} (-1)^i \epsilon_i a_{i,n+3}^i =
(-1)^j \epsilon_j a_{j,n+3}^j
\end{equation}
By equations (\ref{e1}) and (\ref{e2}), we have, for all
$1<k<i<j<n+3$,
\begin{equation}\label{e3}
\begin{split}
  & a_{k,i}^k c_{k,j}-a_{k,j}^k c_{k,i} \\
=&a_{k,i}^k a_{k,j}^j e_j -a_{k,j}^k a_{k,i}^i e_i \\
=&(-1)^{n-k} \epsilon_k a_{i,n+3}^1 (-1)^{n+1-j} \epsilon_j
a_{k,n+3}^1 e_j -(-1)^{n-k} \epsilon_k a_{j,n+3}^1 (-1)^{n+1-i}
\epsilon_i a_{k,n+3}^1 e_i \\
=&(-1)^{n-k}\epsilon_k a_{i,j}^j a_{k,n+3}^1 e_j +
(-1)^{n-k}\epsilon_k a_{i,j}^i a_{k,n+3}^1 e_i\\
=&(-1)^{n-k}\epsilon_k a_{k,n+3}^1 c_{i,j}.
\end{split}
\end{equation}
Similarly, for all $1<i<j<n+3$,
\begin{equation}\label{e4}
(-1)^{n+1-i} \epsilon_i a_{i,n+3}^i c_{i,j} = -a_{j,n+3}^1
c_{i,n+3}+a_{i,n+3}^1 c_{j,n+3}.
\end{equation}
By induction, $\dim {\mathfrak g}^1 \leqslant n+1$ by equations
(\ref{e3}) and (\ref{e4}). It contradicts $\dim {\mathfrak g}^1 =
n+2$ because $\dim {\mathfrak g}^1 + \dim C({\mathfrak g}) = n+3$.
So the lemma follows.
\end{proof}

\begin{theorem}\label{thm}
Let $\mathfrak g$ be a non-abelian $(n+3)$-dimensional metric
$n$-Lie algebra. Then $\mathfrak g$ must be one of the following
cases (only non-zero products or brackets are given).
\begin{enumerate}
\item There exists a basis $x, e_1,\cdots,e_{n+2}$ of
      $\mathfrak g$ such that $\langle x,x\rangle =\pm 1$, $\langle
      e_1,e_1\rangle =\pm 1$, $\langle e_i, e_i \rangle =1$ for all
      $2\leqslant i\leqslant m$, $\langle e_i, e_i \rangle=-1$ for all $m
      < i \leqslant n+2$ and the brackets are given by
      $$[e_2, \cdots, \hat{e_r}, \cdots, e_{n+2}]=a_r e_r,$$
      where
      $2\leqslant r\leqslant n+2$ and $(-1)^i a_i \langle e_i, e_i \rangle  = (-1)^j a_j \langle
      e_j, e_j \rangle $ for all $2\leqslant i, j \leqslant n+2$.
\item There exists a basis $x, e_1,\cdots,e_{n+2}$ of
      $\mathfrak g$ such that $\langle x,x\rangle =\pm 1$, $\langle e_1,
      e_{n+2}\rangle = 1$, $\langle e_i, e_i \rangle =1$ for all
      $2\leqslant i\leqslant m$, $\langle e_i, e_i \rangle=-1$ for all $m
      < i \leqslant n+1$ and the brackets are given by
      \begin{equation*}
      [e_2, \cdots, e_{n+1}]= a_1 e_1, \quad [e_2, \cdots, \hat{e_i},
      \cdots, e_{n+2}]= a_i e_i,
      \end{equation*}
      where
      $2\leqslant i\leqslant n+1$ and $(-1)^n a_1 = (-1)^i a_i \langle e_i, e_i \rangle .$
\item There exists a basis $e_1,\cdots,e_{n+3}$ of $\mathfrak g$
      such that $\langle e_1, e_{n+3}\rangle =1$, $\langle
      e_2,e_{n+2}\rangle =1$, $\langle e_i, e_i \rangle =1$ for all
      $3\leqslant i\leqslant m$, $\langle e_i, e_i \rangle=-1$ for all $m
      < i \leqslant n+1$ and the brackets are given by
      \begin{equation*}
      \begin{split}
      & [e_3, \cdots, e_{n+1}, e_{n+2}]=a_1 e_1, \quad [e_3, \cdots, e_{n+1}, e_{n+3}]=a_2 e_2, \\
      & [e_3, \cdots, \hat{e_i},  \cdots, e_{n+1}, e_{n+2},   e_{n+3}]=a_i e_i,
\end{split}
\end{equation*}
where $3\leqslant i\leqslant n+1$ and $(-1)^{n+1} a_1 =
(-1)^{n+2}a_2 =(-1)^i a_i \langle e_i, e_i \rangle .$
\end{enumerate}
\end{theorem}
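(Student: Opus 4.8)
The strategy is to control $\mathfrak g$ through its centre, using Lemma~\ref{2}, which gives $C(\mathfrak g)=(\mathfrak g^1)^\perp$ and hence $\dim C(\mathfrak g)=n+3-\dim\mathfrak g^1$. First I would pin down the dimensions. Complexifying and applying Theorem~\ref{myt} gives $\dim\mathfrak g^1\leqslant n+2$, so $\dim C(\mathfrak g)\geqslant 1$, and then Lemma~\ref{l3} forces $\dim C(\mathfrak g)\in\{1,2\}$. I claim the value $1$ is impossible: if $C(\mathfrak g)$ were isotropic this would contradict Lemma~\ref{l4}, and otherwise $C(\mathfrak g)=\mathbb R x$ with $\langle x,x\rangle\neq 0$, so that $\mathfrak g=\mathbb R x\oplus x^\perp$ is an orthogonal direct sum of ideals and $x^\perp$ is an $(n+2)$-dimensional metric $n$-Lie algebra with $C(x^\perp)=0$, hence perfect by Lemma~\ref{2}; but the analysis of Theorem~\ref{myt} run one dimension lower --- now with a $1$-dimensional (hence, by Lemma~\ref{weight}, trivial and central) radical, via Lemmas~\ref{l1} and~\ref{l2} --- shows that every $(n+2)$-dimensional $n$-Lie algebra satisfies $\dim\mathfrak h^1\leqslant n+1$, a contradiction. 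Therefore $\dim C(\mathfrak g)=2$ and $\dim\mathfrak g^1=n+1$.

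Next I would split according to the rank $r$ of the restriction of $\langle\,,\rangle$ to the $2$-dimensional space $C(\mathfrak g)$. When $r=2$, $C(\mathfrak g)$ is non-degenerate and $\mathfrak g=C(\mathfrak g)\oplus C(\mathfrak g)^\perp=C(\mathfrak g)\oplus\mathfrak g^1$ is an orthogonal direct sum of ideals with $C(\mathfrak g)$ abelian; the factor $\mathfrak g^1$ is an $(n+1)$-dimensional metric $n$-Lie algebra with zero centre, hence perfect, hence --- by Lemmas~\ref{l1}, \ref{l2} and Theorem~\ref{T1}, since its complexification must be the unique simple $(n+1)$-dimensional $n$-Lie algebra --- isomorphic to a simple real $n$-Lie algebra. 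Choosing a basis $x,e_1$ of $C(\mathfrak g)$ diagonalizing $\langle\,,\rangle$ and the standard basis $e_2,\dots,e_{n+2}$ of $\mathfrak g^1$ (rescaled so that $\langle e_i,e_i\rangle=\pm1$), identity~(\ref{def2}) becomes exactly the relation $(-1)^ia_i\langle e_i,e_i\rangle=(-1)^ja_j\langle e_j,e_j\rangle$; this is case~(1).

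When $r=1$, the radical of $\langle\,,\rangle|_{C(\mathfrak g)}$ is an isotropic line $\mathbb R e_1$ and $C(\mathfrak g)=\mathbb R e_1\oplus\mathbb R x$ with $\langle x,x\rangle=\pm1$; splitting off the non-degenerate ideal $\mathbb R x$ leaves an $(n+2)$-dimensional metric $n$-Lie algebra with $1$-dimensional isotropic centre $\mathbb R e_1$. When $r=0$, $C(\mathfrak g)=\mathrm{span}\{e_1,e_2\}$ is totally isotropic. In both cases I would pick a Witt basis adapted to the flag $C(\mathfrak g)\subset\mathfrak g^1\subset\mathfrak g$: isotropic partners ($e_{n+2}$ with $\langle e_1,e_{n+2}\rangle=1$, and for $r=0$ also $e_{n+3}$ with $\langle e_2,e_{n+3}\rangle=1$) together with an orthonormal basis of the non-degenerate orthogonal complement ($e_2,\dots,e_{n+1}$ if $r=1$; $e_3,\dots,e_{n+1}$ if $r=0$). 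The key point is that the bracket is encoded by the alternating $(n+1)$-form $\phi(x_1,\dots,x_{n+1})=\langle[x_1,\dots,x_n],x_{n+1}\rangle$, which is totally skew by~(\ref{def2}) and vanishes as soon as one argument lies in $C(\mathfrak g)$, because $[x_1,\dots,x_n]\in\mathfrak g^1=C(\mathfrak g)^\perp$. Hence $\phi$ factors through the $(n+1)$-dimensional quotient $\mathfrak g/C(\mathfrak g)$, on which it must be a scalar multiple of the determinant form in the chosen basis; re-expanding this multiple through the metric produces precisely the bracket tables and the sign relations among the $a_i$ recorded in cases~(2) and~(3), and adjoining $\mathbb R x$ in the case $r=1$ completes the description.

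The main obstacle is organisational rather than conceptual: one must check that a Witt basis can be chosen compatibly with the flag $C(\mathfrak g)\subset\mathfrak g^1\subset\mathfrak g$, and then keep careful track of signs when converting ``$\phi$ is a multiple of the determinant form'' into the normalised brackets $[e_2,\dots,\widehat{e_i},\dots]=a_ie_i$ --- the relations $(-1)^na_1=(-1)^ia_i\langle e_i,e_i\rangle$ in case~(2) and $(-1)^{n+1}a_1=(-1)^{n+2}a_2=(-1)^ia_i\langle e_i,e_i\rangle$ in case~(3) come exactly from the position of the omitted index in the expansion of the determinant. Equivalently one may proceed as in the proof of Lemma~\ref{l4}: write all structure constants in the adapted basis, impose~(\ref{def2}) and the generalized Jacobi identity, and solve the resulting linear system; this is more computational but runs along the same lines. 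Beyond Lemmas~\ref{l1}--\ref{l4}, \ref{2}, \ref{weight}, Theorems~\ref{myt} and~\ref{T1}, and elementary bilinear-form normal forms, no further input is needed.
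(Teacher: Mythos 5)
Your proposal is correct and arrives at the same three normal forms, but it is organised differently from the paper and is more self-contained in one respect. The paper splits only into ``$C(\mathfrak g)$ non-isotropic'' versus ``$C(\mathfrak g)$ isotropic'': the first case is disposed of in one line by splitting off a definite central line and quoting the classification of $(n+2)$-dimensional metric $n$-Lie algebras from \cite{RCL}, which yields cases (1) and (2); the isotropic case is handled much as you do, with Lemma~\ref{l4} giving $\dim C(\mathfrak g)=2$, an adapted Witt basis, and the invariance identity (\ref{def2}) forcing the bracket table and sign relations of case (3). You instead first establish the Corollary ($\dim C(\mathfrak g)=2$ and $\dim\mathfrak g^1=n+1$) --- your elimination of a one-dimensional non-isotropic centre, by showing that a perfect $(n+2)$-dimensional metric $n$-Lie algebra cannot exist, is sound and is essentially the argument of Theorem~\ref{myt} run one dimension lower --- and then trichotomise by the rank of $\langle\,,\rangle|_{C(\mathfrak g)}$, rederiving cases (1) and (2) from Theorem~\ref{T1} rather than citing \cite{RCL}. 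Your key device, that $\phi(x_1,\dots,x_{n+1})=\langle[x_1,\dots,x_n],x_{n+1}\rangle$ is totally skew, kills any argument in $C(\mathfrak g)$, and hence descends to a multiple of the determinant form on the $(n+1)$-dimensional quotient $\mathfrak g/C(\mathfrak g)$, is exactly what lies (unstated) behind the paper's step asserting that the brackets must take the form (\ref{id1}); making it explicit is a genuine gain in transparency. What your route buys is independence from the external reference \cite{RCL}; what the paper's route buys is brevity. The only point you leave implicit that the paper mentions is the direct verification that the resulting bracket tables satisfy the generalized Jacobi identity, which is needed only for realisability of the listed algebras and not for the classification direction stated in the theorem.
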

\begin{proof} If $C(\mathfrak g)$ is non-isotropic, then we have the cases (1) and (2) by the classification of
$(n+2)$-dimensional metric $n$-Lie algebras \cite{RCL}. If
$C(\mathfrak g)$ is isotropic, then $\dim C(\mathfrak g)=2$ by Lemma
\ref{l4}. Therefore, there is a basis of $e_1, \cdots, e_{n+3}$ of
$\mathfrak g$ such that $e_1, e_2 \in C(\mathfrak g), e_1, \cdots,
e_{n+1} \in \mathfrak g^1$ and
$$\langle e_1, e_{n+3} \rangle = \langle e_2, e_{n+2} \rangle =1,
\quad \langle e_i, e_i \rangle = \epsilon_i,$$ where $ 3\leqslant
i\leqslant n+1$. Then we have
\begin{equation}\label{id1}
\begin{split}
&[e_3, \cdots, e_{n+2}]=a_1 e_1,\quad [e_3, \cdots, e_{n+1}, e_{n+3}]=a_2 e_2, \\
&[e_3, \cdots, \hat{e_i}, \cdots, e_{n+2}, e_{n+3}]=a_i e_i, \quad
3\leqslant i\leqslant n+1.
\end{split}
\end{equation}
Since $\dim \mathfrak g^1=n+1$, we know $a_i \neq 0$ for all
$1\leqslant i\leqslant n+1$. Also we have
\begin{equation*}
\begin{split}
 \langle [e_3, \cdots, \hat{e_i}, \cdots, e_{n+2}, e_{n+3}], e_i \rangle
= & - \langle [e_3, \cdots, \hat{e_i}, \cdots, e_{n+2}, e_i],
e_{n+3}
\rangle \\
= & - (-1)^{n+2-i} \langle [e_3, \cdots, e_{n+2}], e_{n+3} \rangle.
\end{split}
\end{equation*}
It follows that $(-1)^{n+1}a_1=(-1)^i a_i \epsilon_i$. Similarly,
$(-1)^{n+2}a_2=(-1)^i a_i \epsilon_i$. It is a direct calculation to
check that (\ref{id1}) satisfies the Jacobi identity. Then the
theorem follows. \end{proof}

\begin{coro}
If $\mathfrak g$ is a non-abelian $(n+3)$-dimensional metric $n$-Lie
algebra, then $\dim C(\mathfrak g) = 2$ and $\dim {\mathfrak g}^1 =
n+1.$
\end{coro}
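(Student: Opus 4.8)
The plan is to read the Corollary off from the normal forms of Theorem \ref{thm}, using Lemma \ref{2} to turn information about $\mathfrak g^1$ into information about $C(\mathfrak g)$. The only thing that needs checking is that, in each of the three cases of Theorem \ref{thm}, $\dim \mathfrak g^1 = n+1$. Once that is known, Lemma \ref{2} gives $C(\mathfrak g) = (\mathfrak g^1)^{\bot}$, and since $\langle\,,\,\rangle$ is non-degenerate this forces $\dim C(\mathfrak g) = (n+3) - (n+1) = 2$, which is exactly the assertion.

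To verify $\dim \mathfrak g^1 = n+1$ in each case, I would first note that the brackets listed in Theorem \ref{thm} are the only nonzero ones, so $\mathfrak g^1$ is precisely the span of those basis vectors $e_i$ whose structure constant $a_i$ is nonzero: the span of $e_2,\dots,e_{n+2}$ in case (1), of $e_1,e_2,\dots,e_{n+1}$ in case (2), and of $e_1,\dots,e_{n+1}$ in case (3). In every case the structure constants are linked by a chain of proportionalities of the shape $(-1)^i a_i\langle e_i,e_i\rangle = (-1)^j a_j\langle e_j,e_j\rangle$, with $a_1$, and in case (3) also $a_2$, included in the chain. Since each $\langle e_i,e_i\rangle = \pm 1$, these relations say that all the $|a_i|$ coincide; hence either all the relevant $a_i$ vanish, in which case every bracket is zero and $\mathfrak g$ is abelian, contradicting the hypothesis, or none of them vanishes. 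In the latter case $\mathfrak g^1$ equals the full span described above, of dimension $n+1$, as required.

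Combining the two paragraphs yields $\dim C(\mathfrak g) = 2$ and $\dim \mathfrak g^1 = n+1$ in all three cases, and since by Theorem \ref{thm} these cases exhaust the non-abelian $(n+3)$-dimensional metric $n$-Lie algebras, the Corollary follows. I do not expect a real obstacle here: the substance is already inside Theorem \ref{thm}, and what remains is the elementary bookkeeping that "non-abelian" excludes the branch in which all the structure constants are zero. (An alternative route would be to avoid the case analysis and instead combine Lemma \ref{l3} and Lemma \ref{l4}, reducing matters to ruling out a $1$-dimensional non-isotropic center; but that would require extra input on $(n+2)$-dimensional metric $n$-Lie algebras, so passing through Theorem \ref{thm} is the cleaner path.)
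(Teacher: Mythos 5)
Your proposal is correct and matches the paper's (omitted, ``obvious'') argument: the corollary is read off directly from the three normal forms of Theorem \ref{thm}, using the proportionality relations among the $a_i$ to see that non-abelian forces all $a_i\neq 0$, hence $\dim\mathfrak g^1=n+1$, and then Lemma \ref{2} with non-degeneracy of $\langle\,,\,\rangle$ gives $\dim C(\mathfrak g)=2$.
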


\section{Acknowledgements}
We would like to express our thanks to Prof. Z.X. Zhang for the
communication in this field.

\end{document}